\newtcolorbox{proofsomething}[1]{
freelance,
breakable,
colback=white,
colframe=white,
width=\dimexpr\textwidth+28pt\relax,
before=\par\vspace{\bigskipamount}\noindent,
enlarge left by=-14pt,
overlay unbroken and first={
  \node[
  anchor=north east,
  inner xsep=8pt,
  xshift=8pt,
  rounded corners=5pt,
  font=\bfseries,
  fill=white,
  align=right] at ([xshift=-0.2cm]frame.north west) (tit) {\strut Proof of\\\Cref{#1}};
  \draw[
  line width=3pt,
  rounded corners=5pt,
  black!10
  ] ([xshift=4pt]frame.north west) -- ([xshift=4pt]frame.south west);
},
overlay middle and last={
  \draw[
  line width=3pt,
  rounded corners=5pt,
  black!10
  ] ([xshift=4pt]frame.north west) -- ([xshift=4pt]frame.south west);
},
frame code={},
interior code={},
top=0pt,
bottom=0pt
}
\crefname{lemma}{lemma}{lemmas}
\crefname{proposition}{proposition}{propositions}
\crefname{definition}{definition}{definitions}
\crefname{theorem}{theorem}{theorems}
\crefname{conjecture}{conjecture}{conjectures}
\crefname{corollary}{corollary}{corollaries}
\crefname{example}{example}{examples}
\crefname{section}{section}{sections}
\crefname{appendix}{appendix}{appendices}
\crefname{figure}{fig.}{figs.}
\crefname{equation}{eq.}{eqs.}
\crefname{table}{table}{tables}
\crefname{item}{property}{properties}
\crefname{remark}{remark}{remarks}
\newtheorem{theorem}{Theorem}
\newtheorem{definition}[theorem]{Definition}
\newtheorem{corollary}[theorem]{Corollary}
\newtheorem{proposition}[theorem]{Proposition}
\newtheorem{lemma}[theorem]{Lemma}
\newtheorem{remark}[theorem]{Remark}
\DeclareMathOperator{\diag}{diag}
\DeclareMathOperator{\poly}{poly}
\DeclareMathOperator{\spec}{spec}
\DeclareMathOperator{\supp}{supp}
\DeclareMathOperator{\rank}{rank}
\DeclareMathOperator{\tr}{tr}
\DeclareMathOperator{\dist}{dist}
\DeclareMathOperator{\diam}{diam}
\newcommand{\op}{\mathbf}
\newcommand{\abs}[1]{|#1|}
\renewcommand{\H} {{\ensuremath{\mathcal H}}\xspace}
\newcommand{\1} {\ensuremath{\mathbbm 1}}
\newcommand{\dd}{\mathrm{d}}
\newcommand{\field}[1] {\mathbbm{#1}}
\newcommand{\QMA}{\textnormal{{QMA}}\xspace}
\newcommand{\QMAEXP}{\textnormal{{QMA\textsubscript{EXP}}}\xspace}
\newcommand{\yes}{\textnormal{{YES}}\xspace}
\newcommand{\no}{\textnormal{{NO}}\xspace}
\newcommand{\BigO}{O}
\newcommand\smallO{
  \mathchoice
    {{\scriptstyle\BigO}}
    {{\scriptstyle\BigO}}
    {{\scriptscriptstyle\BigO}}
    {\scalebox{.7}{$\scriptscriptstyle\BigO$}}
  }
\DeclareMathOperator{\lmin}{\lambda_\mathrm{min}}
\DeclareMathOperator{\lnz}{\lambda_{\neq0}}
\newcommand{\subalign}[1]{%
  \vcenter{%
    \Let@ \restore@math@cr \default@tag
    \baselineskip\fontdimen10 \scriptfont\tw@
    \advance\baselineskip\fontdimen12 \scriptfont\tw@
    \lineskip\thr@@\fontdimen8 \scriptfont\thr@@
    \lineskiplimit\lineskip
    \ialign{\hfil$\m@th\scriptstyle##$&$\m@th\scriptstyle{}##$\crcr
      #1\crcr
    }%
  }
}
\newcommand{\Hprop}{\op H_\textnormal{prop}}
\newcommand{\Hclock}{\op H_\textnormal{clock}}
\newcommand{\HFK}{\op H_\textnormal{FK}}
\newcommand{\Hkit}{\op H_\textnormal{Kitaev}}
\newcommand{\Hin}{\op H_\textnormal{in}}
\newcommand{\Hout}{\op H_\textnormal{out}}
\newcommand{\Pin}{\Pi_\textnormal{in}}
\newcommand{\Pout}{\Pi_\textnormal{out}}
\newcommand{\listoftodos}{}
\begin{document}
\thispagestyle{empty}
\listoftodos
\newpage

\title{Analysis and limitations of modified\\circuit-to-Hamiltonian constructions}
\author{Johannes Bausch}
\email{jkrb2@cam.ac.uk}
\affiliation{DAMTP, CQIF, University of Cambridge}
\orcid{0000-0003-3189-9162}
\author{Elizabeth Crosson}
\email{crosson@caltech.edu}
\affiliation{IQIM, California Institute of Technology}
\orcid{0000-0002-7847-6354}


\begin{abstract}
Feynman's circuit-to-Hamiltonian construction connects quantum computation and ground states of many-body quantum systems.  Kitaev applied this construction to demonstrate \QMA-completeness of the local Hamiltonian problem, and Aharanov et al.\ used it to show the equivalence of adiabatic computation and the quantum circuit model.   In this work, we analyze the low energy properties of a class of modified circuit Hamiltonians, which include features like complex weights and branching transitions.  

For history states with linear clocks and complex weights, we develop a method for modifying the circuit propagation Hamiltonian to implement any desired distribution over the time steps of the circuit in a frustration-free ground state, and show that this can be used to obtain a constant output probability for universal adiabatic computation while retaining the $\Omega(T^{-2})$ scaling of the spectral gap, and without any additional overhead in terms of numbers of qubits.

Furthermore, we establish limits on the increase in the ground energy due to input and output penalty terms for modified tridiagonal clocks with non-uniform distributions on the time steps by proving a tight $\BigO(T^{-2})$ upper bound on the product of the spectral gap and ground state overlap with the endpoints of the computation.
Using variational techniques which go beyond the $\Omega(T^{-3})$ scaling that follows from the usual geometrical lemma, we prove that the standard Feynman-Kitaev Hamiltonian already saturates this bound.

We review the  formalism of unitary labeled graphs which replace the usual linear clock by graphs that allow branching and loops, and we extend the $\BigO(T^{-2})$ bound from linear clocks to this more general setting.
In order to achieve this, we apply Chebyshev polynomials to generalize an upper bound on the spectral gap in terms of the graph diameter to the context of arbitrary Hermitian matrices.
\end{abstract}
\maketitle
\newpage

\tableofcontents

\section{Introduction}
The initial demonstration of QMA-completeness of the local Hamiltonian problem~\cite{Kitaev2002} was followed by a period of development during which the main goal was to broaden the class of interaction terms which suffice to make the local Hamiltonian problem QMA-complete~\cite{Kempe2006,Oliveira2008}.
These results were motivated in part by a desire to understand the hardness of approximating physical systems that resemble those found in nature, and also by the goal of making the closely related universal adiabatic computation construction~\cite{aharonov2008adiabatic} more suitable for eventual physical implementation~\cite{biamonte2008realizable}.
The success of these efforts have resulted in QMA-complete local Hamiltonian problems with restricted properties such as 2-local interactions~\cite{Kempe2006}, low dimensional geometric lattices~\cite{Oliveira2008,Aharonov2009}, and translational invariance, as well as a complete classification of the complexity of the 2-local Hamiltonian problem for any set of interaction couplings~\cite{cubitt2016complexity}.

This great success in classifying the hardness of physically realistic interactions stands in contrast with the relative lack of progress in resolving questions related to the robustness of quantum ground state computation, such as whether fault-tolerant universal adiabatic computing is possible, or to prove or disprove the quantum PCP conjecture~\cite{aharonov2013guest}.
Such questions motivate us to seek (or to limit the possibility of) improvements to the circuit-to-Hamiltonian construction itself, which serves a foundational role in all of the results listed above.
Based on ideas by Feynman~\cite{Feynman1986} and cast into its current form by Kitaev~\cite{Kitaev2002}, the construction remains relatively little changed, having undergone only some gradual evolutions since its modern introduction~\cite{Gosset2013,Aharonov2009,Gottesman2009,breuckmann2014space,Hallgren2013,NagajTalk,Bausch2016}.
Only recently steps have been undertaken to analyse modifications in depth \cite{Usher2017}, and revisit the spectral properties of Kitaev's original construction \cite{Caha2017}.

To encode computation robustly the circuit-to-Hamiltonian construction should not only have a ground space representing valid computations, but intuitively it should penalize invalid computations with as high of an energy as possible.
One way of formalizing this condition is to add constraints on the input and the output of the circuit that cannot be simultaneously satisfied under the valid operation of the circuit.
If the Hamiltonian enforces the correct operation of the circuit gates, then the input and output constraints that contradict each other should not be satisfiable by any state, and so the ground state energy should increase.
This explains why the higher ground state energy associated with non-accepting circuits can be regarded as an energy penalty against invalid computations, which we call the ``quantum UNSAT penalty''.

The specific role of the $\Omega(T^{-3})$ scaling of the quantum UNSAT penalty in Kitaev's proof is to show that the local Hamiltonian problem is QMA-hard with a promise gap that scales inverse polynomially in the system size.
While there exists a well-defined relation between runtime $T$ and the corresponding Hamiltonian's system size $n$ for any \emph{specific} set of constructions---e.g.\  Kitaev's 5-local one---the explicit scaling of this gap with $T$ is not meaningful to the local Hamiltonian problem beyond the fact that it is polynomial, since the local Hamiltonian problem promise gap is parameterized by the number of qubits $n$, i.e.~$1/\poly n$.

Nevertheless, the scaling of the quantum UNSAT penalty with $T$ is a well defined feature of any particular circuit-to-Hamiltonian construction, and therefore we take the view that it is a reasonable metric for exploring the space of possible improvements to this construction.
The goal of this paper is to answer a series of specific questions revolving around the UNSAT penalty.
\begin{enumerate}
\item Is the geometrical lemma tight for Kitaev's original construction?
\item Can the circuit-to-Hamiltonian construction be improved with regard to universal adiabatic computation?
\item Can we modify the Feynman-Kitaev construction---by introducing clock transitions with varying weights, or by adding branching or loops as analysed in the context of unitary labeled graphs---in order to improve on the known $\Omega(T^{-3})$ bound on the UNSAT penalty?
\end{enumerate}
In the next section, we motivate each of these questions and formally state our results.

\section{Results and Overview}
\subsection{Background and Notation}
We analyze history state Hamiltonians on a bipartite Hilbert space $\H=\field C^{T+1}\otimes \H_q$, which are called clock and quantum register, respectively.
In the bulk of this paper, we consider history states consisting of an arbitrary complex superposition of the time steps of a computation,
\begin{equation}
	\ket\psi = \sum_{t = 0}^T \psi_t \ket t \otimes \left (\op U_t \cdots \op U_1\right) \ket\xi, \label{eq:weighted}
\end{equation}
where as usual $\op U_T,\ldots,\op U_1$ are (local) quantum gates, $\ket\xi$ is an arbitrary input to the computation, and $\ket\psi$ is a normalized state, so that in particular $\pi_t := \psi^*_t \psi_t$ is a probability distribution over $\{0,\ldots,T\}$.
Hamiltonians with ground states of the form as in \cref{eq:weighted} arise from modifications to the usual terms of the Feynman circuit Hamiltonian,
\begin{equation}\label{eq:ham}
	\Hprop := \sum_{t = 0}^T a_t \ketbra t \otimes \1 + \sum_{t =0}^{T-1} \left( b_t\ketbra{t+1}{t}\otimes \op U_t + b^*_t \ketbra{t}{t+1}\otimes \op U^\dagger_t \right),
\end{equation}
where we restrict $|a_t|,|b_t|\leq 1$ for $t = 0,\ldots,T$.
Note that most if not all of the constructions that break down the $\BigO(\log n)$-local time register interactions to a $k$-local Hamiltonian--- such as the domain wall clock which leads to a 5-local circuit Hamiltonian---can be directly applied to the modified form \cref{eq:ham}, up to a constant prefactor (see \cref{eq:standardHam}).
In \cref{lem:clockham}, we show that $\Hprop$ in \cref{eq:ham} is unitarily equivalent to a Hamiltonian $\Hclock\otimes\1_q$---i.e.\ acting trivially on the quantum register---where
\begin{equation}\label{eq:clockham}
    \Hclock = \sum_{t=0}^T a_t \ketbra t\otimes\1 + \sum_{t=0}^{T-1}(b_t\ketbra{t+1}{t} + b_t^*\ketbra{t}{t+1}).
\end{equation}

In addition to the part of the Hamiltonian which checks the propagation of the circuit, projectors $\Hin:= \ketbra 0\otimes \Pin$ and $\Hout := \ketbra T\otimes \Pout$ can be added to $\Hprop$ to validate specific inputs and the outputs of the computation.
We define the sum of all these terms to be the \emph{modified Feynman-Kitaev Hamiltonian},
\begin{equation}\label{eq:mFK}
	\HFK:=\Hprop + \Hin + \Hout,
\end{equation}
as opposed to the Feynman circuit Hamiltonian in \cref{eq:ham}, which is just the propagation part $\Hprop$ without the penalties.

More specifically, the ground space of $\Hprop + \Hin$ will be spanned by computations starting from a valid input computation (i.e.~those for which $\ket\xi\in\ker\Pin$ in \cref{eq:weighted}), and $\Hout$ will raise the energy of the state $\ket\psi$ in \cref{eq:weighted} when $\op U_T\cdots \op U_1\ket\xi\not\in\ker\Pout$.
The magnitude of this frustration between the incompatible ground spaces of $\Hin + \Hprop$ and $\Hout$ will depend on the circuit encoded by $\HFK$ and the specific in- and output energy penalties, i.e.~the maximum acceptance probability of the circuit
\begin{equation*}
	\epsilon:=\max_{\subalign{\ket\xi &\in \ker \Pin\\\ket\eta &\in \ker \Pout} }=\bra\eta \op U_T\cdots \op U_1\ket\xi.
\end{equation*}
In the following definition we take the $\Pin$ and $\Pout$ that are used in the standard construction: $\Pout :=\proj 0$ measures a single qubit and penalizes it in state $\ket 0$ (i.e.\ ``not accepted''), and $\Pin$ constrains a fraction of the input qubits to the $\ket 0$ state as ancillas, some to an encoded string describing the problem instance, and leaves the rest of the input qubits unconstrained to act as a witness, which allows the embedded computation to act as a verifier circuit for e.g.\ \QMA or \QMAEXP verification purposes.

For a specific set of fixed in- and output constraints and fixed runtime length $T$, we want to identify the circuit Hamiltonian best suitable to discriminate accepting and rejecting circuit paths, independent of the particular circuits used.
Let $\epsilon>0$ be some acceptance probability.
We define the quantity $C(\epsilon,T)$ to be the set of those circuits of size $T$, for which the maximum acceptance probability amongst any states obeying the in- and output constraints $\Pin$ and $\Pout$ is precisely $\epsilon$:
\begin{equation*}
	C(\epsilon,T) := \{\op U_1,\ldots,\op U_T :  \max_{\subalign{\ket\xi &\in \ker \Pin\\\ket\eta &\in \ker \Pout} } |\bra\xi \op U_1 \cdots \op U_T \ket\eta|^2 = \epsilon\}.
\end{equation*}
Note that we allow the $\op U_i$ to be completely arbitrary---the precise action of the circuit, whether it performs Grover search or is just a random set of gates, is irrelevant for $C(\epsilon,T)$ as we have defined it here.
This makes sense: we want to completely decouple the notion of how well a modified circuit Hamiltonian can penalize in- and output from what the circuit itself does.

This leads to our definition of the quantum UNSAT penalty of a circuit-to-Hamiltonian construction, which quantifies the extent to which a Hamiltonian as in \cref{eq:ham} can enforce the input and output penalties described above for an arbitrary circuit.
\begin{definition}\label{def:unsat}
	Let the $E(\HFK)$ and $E(\Hprop)$ be the ground state energies of $\HFK$ and $\Hprop$, respectively, and define the quantum UNSAT penalty $E_p(\epsilon,T)$
	\[
		E_p(\epsilon, T) := \min_{\op U_1,\ldots,\op U_T \in C(\epsilon,T) } E(\HFK) - E(\Hprop).
	\]
\end{definition}

The reason for explicitly subtracting the energy term $E(\Hprop)$ is that this quantity is not necessarily zero for propagation Hamiltonians of the form \eqref{eq:ham}, and so in these cases of frustrated propagation terms the UNSAT penalty is defined to be the additional energy due to the rejection probability of the circuit.\footnote{The term ``UNSAT'' derives from the related QUNSAT$_\psi=\sum_{e\in E}\bra\psi \op h_e \ket\psi / |E|$ quantity that was previously defined and analyzed using the detectability lemma \cite{Aharonov2009a}. 
We use the term UNSAT \emph{penalty} to emphasize that it is the energy difference between accepting and non-accepting computations.}

\subsection{Results}
\subsubsection{Non-Uniform Circuit Histories}
Our first step in analyzing the UNSAT penalty of modified Feynman Hamiltonians is to apply  a modified version of the argument used in the standard construction (see \cite[sec.~14.4]{Kitaev2002}) to ``undo'' the computation and show that $\Hprop$ is unitarily equivalent to a clock Hamiltonian which acts trivially on the computational register.

\begin{restatable}{lemma}{clockham}
	\label{lem:clockham}
	If $\op W := \sum_{t=0}^T \proj t \otimes (\op U_t \cdots \op U_1)$, then $\op W$ is unitary and $\op W^\dagger \Hprop \op W = \op H_\textnormal{clock} \otimes \1$, where the clock Hamiltonian $\Hclock$ is given by \cref{eq:clockham}.
\end{restatable}

Next we apply the same geometrical lemma used in Kitaev's proof to lower bound the UNSAT penalty of modified Feynman Hamiltonians.
\begin{restatable}{lemma}{geo}
	\label{lem:geo} 
	If the spectral gap of the corresponding clock Hamiltonian $\Hclock(T)$---denoted  $\Delta_{\op H}(T)$---is less than the (constant) spectral gap of $\Hin + \Hout$, then
	\begin{equation}
		\label{eq:geo} \frac{\Delta_{\op H}(T)}{4} (1 - \sqrt{\epsilon}) \times \min\{\pi_0, \pi_T\} \le E_p(\epsilon,T) \le E\left(\Hclock(T) + \ketbra0 + \ketbra T\right ) .
	\end{equation}
\end{restatable}

The upper bound follows from the operator inequalities $\Pi_\textrm{in} \preceq \mathbbm{1}$ and $\Pi_\textrm{out} \preceq \mathbbm{1}$, and it says that the increase in the ground state energy due to the penalty terms is bounded by the case when all of the frustration is in the system's time register.
The lower bound states that the UNSAT penalty can be increased either by boosting the spectral gap of the clock Hamiltonian, or by amplifying the overlap of the ground state with the beginning and ending time steps of the computation (i.e.\ modifying the ground state $\ket{\psi_0}$ of $\Hprop$ such that $\bra{\psi_0}\Pin\ket{\psi_0}$ is maximized, and analogously for $\Pout$).
To see that the overlap with the endpoints of the computation can be made arbitrarily close to one, we prove that it is in fact possible to construct a clock Hamiltonian with an arbitrary distribution as its ground state.

\begin{restatable}{lemma}{mcham}
	\label{lem:mcham}
	For any probability distribution $\mu$ with support everywhere on its domain $\{0,\ldots,T\}$, there is a choice of coefficients $\{a_t,b_t\}_{t=0}^T$ in \cref{eq:clockham} such that $\Hclock$ is stoquastic and frustration-free  and has a ground space spanned by states of the form \cref{eq:weighted} with weights $\psi_t = \sqrt{\mu_t}$.
\end{restatable}

Using \cref{lem:mcham} we exhibit a modified Hamiltonian for a target ground state distribution with $\pi_0,\pi_T\ge1/4$, and show that it has a spectral gap that is $\Omega(T^{-2})$ to establish our first main result; in order to prove it, we use the \emph{geometrical lemma} for modified Feynman Hamiltonians, a by now standard but central proof technique that is used ubiquitously in hardness constructions of the local Hamiltonian problem.
\begin{restatable}{theorem}{main}
	\label{theo:main}
	There is a frustration-free modified circuit Hamiltonian as in \cref{eq:ham} with an UNSAT penalty $E_p(\epsilon, T)$ that is $\Omega((1 - \sqrt{\epsilon})T^{-2})$.
\end{restatable}

In \cref{sec:padding-ham} we use the geometrical lemma to show that padding the standard circuit-to-Hamiltonian construction with $T$ input and output penalties also achieves an $\Omega(T^{-2})$ scaling of the UNSAT penalty.

\subsubsection{Universal Adiabatic Computation}
The modified circuit Hamiltonian used in \cref{theo:main} also has a potential advantage for universal adiabatic computation (AQC).  The original construction of universal AQC linearly interpolates between the initial $\op H_{\textrm{init}}:= \op H_{\textrm{in}}$ constraint and the final Hamiltonian $\op H_{\textrm{final}} := \op H_{\textrm{in}} + \op H_{\textrm{prop}}$,
\begin{equation}
\op H(s) = (1-s) \op H_{\textrm{init}} + s \op H_{\textrm{final}}.\label{eq:adiabaticLinear}
\end{equation}
By initializing the system in the ground state of $\op H_{\textrm{init}}$ at s = 0, and increasing $s$ to 1 in a continuous fashion and sufficiently slowly, the system will remain near its instantaneous ground state.
This requires the total time for $s$ to rise from 0 to 1 to scale as $\textrm{poly}(n,\Delta_{\min}^{-1})$, where $\Delta_{\min} := \min_s \Delta_{\op H(s)}$ and $n$ the system's size.  In~\cite{aharonov2008adiabatic} it is proven that $\Delta_{\min} = \Omega(T^{-2})$ for the linear interpolation \eqref{eq:adiabaticLinear} and the standard $\op H_{\textrm{prop}}$ from Kitaev's construction.

Since the goal of universal AQC is to efficiently produce the state at the final adiabatic time $s = 1$, one needs to measure the clock register and condition success on finding the computation in its final  clock state, i.e.\ $t = T$.  It would appear that this happens with probability only $1/T$ for the standard construction, but this can be improved to any $\Omega(1)$ probability by the ``padding with the identity'' trick, e.g.\ by dovetailing the computation by a series of identity gates for $t=T,\ldots,4T$.  This padding increases the length of the clock (and hence the number of qubits needed to represent it) by a constant factor.  This is where the modified circuit Hamiltonian from \cref{theo:main} leads to an improvement: by tuning the couplings of the clock Hamiltonian, it is possible to create an arbitrarily high probability of finding the system in the final configuration at $t = T$, \emph{without} any increase in the size of the clock.  To show that this works we need to lower bound $\Delta_{\min}$ for this modified construction, resulting in the following theorem.
\begin{restatable}{theorem}{adiabatic}
	\label{theo:adiabatic}
	For any probability $p > 0$ there is a frustration-free modified circuit Hamiltonian as in \cref{eq:ham} with $\pi_T = p$ such that the minimum spectral gap of the linear interpolation \eqref{eq:adiabaticLinear} satisfies $\Delta_{\min} = \Omega(T^{-2})$.  	
\end{restatable}
\subsubsection{Tightness of the Geometrical Lemma}
A natural question is whether the lower bound given by the geometrical lemma is asymptotically tight---$\Omega(T^{-3})$, as stated e.g.\ in Kitaev's original paper \cite{Kitaev2002}---for (uniform weight) Feynman Hamiltonians.  
We show that this is not the case.
As a first step we apply an improved analysis to Kitaev's original construction with uniform weights,
\begin{align}
	\Hprop & = \sum_{t =0}^{T-1} \left (\ketbra{t}\otimes\1 + \ketbra{t+1}\otimes\1 - \ketbra{t+1}{t}\otimes \op U_t - \ketbra{t}{t+1}\otimes \op U^\dagger_t \right),\label{eq:standardHam} \\
	\ket\psi          & = \frac{1}{\sqrt{T+1}}\sum_{t = 0}^T \ket t \otimes \left(\op U_t \ldots \op U_1 \right) \ket\xi \label{eq:standardHistoryState}
\end{align}
and we find that $E(\HFK) = \Omega(T^{-2})$ for $\Hprop$ as in \cref{eq:standardHam}.
We capture this in the following theorem.
\begin{restatable}{theorem}{kitaev}\label{theo:kitaev}
	The UNSAT penalty in Kitaev's local Hamiltonian construction is $\Omega(T^{-2})$.
\end{restatable}
Note that this result has been independently proven using Jordan's lemma type techniques by \citeauthor{Caha2017} \cite{Caha2017} and us \cite{BauschThesis}.
In this work, we include a direct proof not based on Jordan's lemma, which can be found in \cref{sec:direct-proof}.

\subsubsection{Limitations on Further Improvements}\label{sec:limitationsIntro}
Finally, by applying the lower bound in \cref{lem:geo} to modified Feynman Hamiltonians of the form in \cref{eq:ham}, we show that the scaling of the UNSAT penalty obtained in \cref{theo:main} is the optimal scaling that can be achieved.
\begin{restatable}{theorem}{tridiag}
	\label{theo:tridiag}
	Let $\ket\psi$ be the ground state of a Hamiltonian $\op H$ with eigenvalues $E_0 \leq E_1 \leq \ldots \leq E_T$.
	If $\op H$ is tridiagonal in the basis $\{\ket0,\ldots,\ket T\}$,
	\begin{equation*}
		\op H := \sum_{t = 0}^T a_t \ketbra t + \sum_{t =0}^{T-1} \left ( b_t\ketbra{t+1}{t} + b^*_t\ketbra{t}{t+1} \right),
	\end{equation*}
	with $|a_t|, |b_t| \leq 1$ for $t = 0,\ldots,T$, then the product $\Delta_{\op H} \cdot \min \{ |\psi|^2_0,  |\psi_T|^2\} $ of the spectral gap $\Delta_{\op H} = E_1 - E_0$ and the minimum endpoint overlap is $\BigO(T^{-2})$.
\end{restatable}

\begin{figure}
  \centering
  \begin{tikzpicture}[node distance = 14mm, text height = 1.5ex, text depth = .25ex]
    \node (5) at (0,0) {5};
    \node (4) at (2,0) {4};
    \node (1) at (4,1) {1};
    \node (3) at (4,-1) {3};
    \node (2) at (6,0) {2};
    \draw[->] (1) to[in=135,out=0] node[above] {$\op U_1$} (2);
    \draw[->] (2) to[in=0,out=-135] node[below,xshift=8] {$\op U_2$} (3);
    \draw[->] (3) to[in=-45,out=180] node[below,xshift=-3] {$\op U_3$} (4);
    \draw[->] (4) to[in=180,out=45] node[above,xshift=-3] {$\1_d$} (1);
    \draw[->] (4) -- node[above] {$\1_d$} (5);
  \end{tikzpicture}
  \caption{A graph $G$ with five vertices $\{1,2,3,4,5\}$; if the assign unitaries to the edges---in this case three non-trivial ones $\op U_1, \op U_2,\op U_3\in\mathrm{SU}(\field C^d)$, and $\1_d$ everywhere else---we call the graph a unitary labeled graph (ULG). Every edge $(a,b)$ with unitary $\op U_{ab}$ is translated into a transition term $\op h_{ab}:=\sum_i(\ket a\ket i-\ket b\otimes \op U_{ab}\ket i)(\mathrm{h.c.})$, where the $a$ and $b$ take the role of the time register. If the product of unitaries along any loop---in this case only $\op U_3\op U_2\op U_1$---equals the identity $\1_d$, the circuit Hamiltonian built from these transition terms $\op H_G:=\sum_{(a,b)\in G}\op h_{ab}$ is unitarily equivalent to $\Delta\otimes\1_d$, where $\Delta$ is the graph Laplacian of the underlying graph $G$, as in \cref{lem:clockham}.}
  \label{fig:ulg-intro}
\end{figure}
Can this limitation for tridiagonal clock Hamiltonians be overcome using a clock that contains branches and loops?  In \cite{Bausch2016} it was shown that the circuit-to-Hamiltonian construction can in principle be considered on arbitrary graphs, with the spectral properties of these circuit Hamiltonians reducing to the analysis of the underlying graph.  These unitary labeled graphs define circuits which evolve according to unitary gates along their edges, see \cref{fig:ulg-intro}.   In this work we also consider generalized ULGs with weights along the vertices and edges, i.e.\ general Hermitian matrices, extending the graph Laplacian case that was analyzed in \cite{Bausch2016}. Given a Hamiltonian with the decomposition $\op H = \sum_{i,j \in \mathcal{B}} H_{ij} \ketbra{i}{j}$ in a particular basis, the generalized ULGs based on $\op H$ will have the form
$$
\Hprop = \sum_{i,j\in \mathcal{B}} H_{ij} \ketbra{i}{j} \otimes \op U_{ij}.
$$

 A ULG is called simple if the product of unitaries around any loop is the identity, and otherwise it is said to be frustrated.  For a simple ULG $\op H_{\textrm{prop}}$ based on a Hamiltonian $\op H$, a generalization of the standard argument shows that there will be a unitary transformation $\op W$ such that $\op W^\dagger \Hprop \op W = \op H \otimes \1$. 

In order to analyze the UNSAT penalty for a large class of generalized ULGs, we first make an observation that relates the low energy spectrum to the UNSAT penalty.  
\begin{remark}
Suppose $\op H_{\textrm{prop}} = \sum_{i,j \in \mathcal{B}} H_{ij} \ketbra{i}{j} \otimes \op U_{ij}$ describes a generalized ULG, and $\op H_{\textrm{in}}:= \sum_{i\in V_{\textrm{in}}} \ketbra i\otimes \Pi_{\textrm{in},i}$ and $\op H_{\textrm{in}}:= \sum_{i\in V_{\textrm{out}}} \ketbra i\otimes \Pi_{\textrm{out},i}$ are input and output projectors.  If $\op H_{\textrm{prop}}$ has $k\geq \min \left \{|V_{\textrm{in}}|,|V_{\textrm{out}}| \right\}  + 1$ excited eigenvalues of energy at most $R$, then the UNSAT penalty satisfies $E\left(\op H_{\textrm{in}} + \op H_{\textrm{prop}} + \op H_{\textrm{prop}} \right)  \leq R$.
\end{remark}
This follows because the $k$ excited states can be used to form a superposition $\ket\phi$ with zero amplitude on $k-1$ sites.  This superposition will have $\bra \phi \op H_{\textrm{prop}}\ket\phi \leq R$, and if the amplitudes are zero on all the input (output) penalties, then $ \bra{\phi}\left(\op H_{\textrm{in}} + \op H_{\textrm{prop}} + \op H_{\textrm{prop}} \right) \ket{\phi} = \bra\phi \op H_{\textrm{prop}}\ket\phi \leq R$.  In the case of an output penalty at a single vertex in the graph (as in the standard tridiagonal case), this shows that the spectral gap will always upper bound the UNSAT penalty. 

To apply this spectral upper bound on the UNSAT penalty, we make the hypothesis that any generalized ULG which describes computations of size $T$ must contain a path of length at least length $T$.  In other words, we start from an assumption that we are given circuits of size $T$ that are ``optimally compiled'', i.e.\ ones that and cannot be computed using local gates by a shorter path.
For a given Hamiltonian $\op H$, we further define the notion of diameter with respect to particular basis $\mathcal{B}$, which is defined to be the smallest power of $\op H$ needed to connect any two basis elements:
$$
\diam_{\mathcal{B}} \op H :=\max_{u,v \in \mathcal{B}} \min \left \{k \in \field N : \bra u \op H^k \ket v \neq 0\right \}.
$$
The ground state $\ket\psi$ of $\op H$ defines a probability distribution $\pi(x) := |\braket{x}{\psi}|^2$ for $x \in \mathcal{B}$.  Let $\pi_\mathrm{min} := \min_{x \in \mathcal{B}} \pi(x)$ and define the spectral gap to be $\Delta_{\op H} = \lambda_1 - \lambda_{\min}$.
This allows us to derive a bound for Hermitian matrices which relates the ground state probability distribution $\pi$ and the matrix diameter to the spectral gap, generalizing a known bound for graph Laplacians~\cite{KelnerMITLecturenotes2009}. 
\begin{restatable}{theorem}{thdiam}\label{th:diam}
 Let $\op H$, $\mathcal{B}$, $\pi$ and $\Delta_{\op H}$ be as defined above. Then
\[ 
	\Delta_{\op H} \leq \frac{\|\op H\|}{2} \left( \frac{\log(2/\pi_{\min})} {\diam_{\mathcal{B}}\op H}\right)^2.
\]
\end{restatable}
Since the diameter is at least $T$ under the optimal compilation assumption, obtaining a spectral gap that scales like $T^{-2 + \epsilon}$ for some $\epsilon > 0$ requires a generalized ULG Hamiltonian with either a growing operator norm, or the presence of exponentially small ground state amplitudes.  The latter case is typically ruled out by the fact that exponentially small amplitudes comprise points at which it is easy to deform the computation and obtain a small UNSAT penalty.

\subsection{Related Work}
The notion of modifying the weights in Feynman's circuit Hamiltonian goes back to at least 1985, when Peres~\cite{Peres1985} considered modified weights for improving the probability of measuring the output of the computation in the ballistic Hamiltonian model.  In the context of universal adiabatic computation Ganti and Somma~\cite{ganti2013gap} derived a limitation on improving the spectral gap of circuit Hamiltonians by applying the lower bound on the Gover search problem.  In \cite{Usher2017}, the authors go beyond unitary circuit include certain projective measurements $\{\Pi,\1-\Pi\}$ (ones whose outcome probabilities are independent of all valid input state at that computational step).  

More recently, \citeauthor{Caha2017} have presented a compilation of several results on modified circuit Hamiltonians~\cite{Caha2017}.  Most of these results are distinct from results in this work, with the exception of \cref{theo:kitaev} which was discovered by us independently
around the same time \cite{BauschThesis}.
Another aspect of \cite{Caha2017} and the present work that can be compared are that both works make an improvement to idling the computation efficiently.  \citeauthor{Caha2017} show that the overlap with the final state of the quantum circuit can be increased arbitrarily close to 1 in the Hamiltonian computation model by using logarithmic overhead in space.  We show a similar result for universal adiabatic computation, with no space overhead but at the expense of requiring higher precision in the Hamiltonian couplings.  

\pagebreak[4]

\section{Non-Uniform Circuit Histories for Improving the UNSAT Penalty}
\subsection{Analysis of Modified Feynman Hamiltonians}\label{sec:analysis}
\begin{proofsomething}{lem:clockham} The lemma, in summary, claims that $\Hprop$ is unitarily equivalent to copies of $\Hclock$; we re-state it here for convenience.
\clockham*
\begin{proof}
Since $\op W$ is a linear operator the calculations we need to check for \cref{lem:clockham} are the same as in the standard unweighted case~\cite[ch.~14.4]{Kitaev2002}.
As a reminder,
\[
	\op W^\dagger \op W  = \sum_{t,t'=0}^T \Big(\proj t \otimes (\op U_1^\dagger \cdots \op U_t^\dagger)\Big)\Big(\proj{t'} \otimes (\op U_1\cdots \op U_{t'}) \Big) = \sum_{t=0}^T \proj t \otimes \1  = \1,
\]
\[
	\op W^\dagger(\ketbra{t+1}{t}\otimes  \op U_{t+1})\op W  = \ketbra{t+1}{t} \otimes (\op U_1^\dagger \cdots \op U_{t+1}^\dagger) \op U_{t+1} (\op U_t\cdots \op U_1)  = \ketbra{t+1}{t} \otimes \1,
\]
and so the claim of \cref{lem:clockham} follows by linearity.
\end{proof}
\end{proofsomething}

\begin{proofsomething}{lem:geo}
Kitaev's geometrical lemma provides the starting point for the lower bound on the UNSAT penalty given in  \cref{eq:geo}.
\begin{lemma}[Kitaev's geometrical lemma]\label{lem:kitaev}
	Let $\op A,\op B\ge0$ be positive semi-definite operators, both with a zero eigenspace, and such that $\ker \op A\cap\ker \op B=\{0\}$.
	Denote with $\lnz(\op A),\lnz(\op B)$ the minimal non-zero eigenvalue of $\op A$ and $\op B$, respectively.
	Then
	\begin{equation}
		\op A+\op B \ge \min\{\lnz(\op A),\lnz(\op B)\}\times 2\sin^2\frac\theta2,\label{eq:kit}
	\end{equation}
	where $\theta$ is the angle between the kernels of $\op A$ and $\op B$.
\end{lemma}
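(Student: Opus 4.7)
}
The plan is to fix an arbitrary unit vector $\ket\psi$ and lower bound $\bra\psi(\op A+\op B)\ket\psi$ in two stages: first replace $\op A$ and $\op B$ by scalar multiples of the projectors onto the orthogonal complements of their kernels, and then use the geometry of those kernels to bound the resulting expression from below.

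For the first stage, let $\op P_A$ and $\op P_B$ denote the orthogonal projectors onto $\ker\op A$ and $\ker\op B$, respectively. Since $\op A$ is positive semidefinite and annihilates $\ker\op A$, its spectral decomposition immediately yields the operator inequality $\op A \succeq \lnz(\op A)\,(\1-\op P_A)$, and analogously $\op B\succeq \lnz(\op B)\,(\1-\op P_B)$. Setting $\mu := \min\{\lnz(\op A),\lnz(\op B)\}$, this gives
\begin{equation*}
  \bra\psi(\op A+\op B)\ket\psi \;\geq\; \mu\Bigl(\,\|(\1-\op P_A)\ket\psi\|^2 + \|(\1-\op P_B)\ket\psi\|^2 \,\Bigr) \;=\; \mu\bigl(\,2-\bra\psi(\op P_A+\op P_B)\ket\psi\bigr).
\end{equation*}
Thus it suffices to upper bound the largest eigenvalue of $\op P_A+\op P_B$ by $1+\cos\theta$.

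For the second stage, I would invoke the canonical form for a pair of projectors: by Jordan's lemma, the ambient Hilbert space decomposes into a direct sum of subspaces of dimension at most two which are simultaneously invariant under $\op P_A$ and $\op P_B$. On the one-dimensional summands the pair acts as $\{0,0\}$, $\{0,1\}$, $\{1,0\}$ or $\{1,1\}$, contributing eigenvalues of $\op P_A+\op P_B$ lying in $\{0,1,2\}$; the value $2$ would force a vector in $\ker\op A\cap\ker\op B$, which is excluded by assumption. On each two-dimensional summand the projectors project onto two distinct lines making some principal angle $\theta_i\in(0,\pi/2]$, and a direct $2\times2$ computation shows $\op P_A+\op P_B$ has eigenvalues $1\pm\cos\theta_i$ there. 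Taking the supremum over all summands and identifying $\theta=\min_i\theta_i$ with the angle between the kernels (equivalently $\cos\theta=\|\op P_A\op P_B\|$), we conclude $\op P_A+\op P_B\preceq (1+\cos\theta)\,\1$.

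Combining the two stages yields $\bra\psi(\op A+\op B)\ket\psi\geq\mu\bigl(2-(1+\cos\theta)\bigr)=\mu(1-\cos\theta)=2\mu\sin^2(\theta/2)$, which is \cref{eq:kit}. The only delicate step is the second stage, where one must correctly identify the angle $\theta$ with the largest principal angle cosine $\|\op P_A\op P_B\|$ and verify via the Jordan decomposition that no rank-one eigenspace of eigenvalue $2$ can appear; once this is in place the rest of the argument is a direct operator-inequality calculation.
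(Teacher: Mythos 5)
Your proposal is correct: the reduction $\op A\succeq\lnz(\op A)(\1-\op P_A)$, $\op B\succeq\lnz(\op B)(\1-\op P_B)$ followed by the bound $\|\op P_A+\op P_B\|\le 1+\cos\theta$ (with $\cos\theta=\|\op P_A\op P_B\|$, the maximal overlap between unit vectors of the two kernels, and the eigenvalue $2$ excluded by $\ker\op A\cap\ker\op B=\{0\}$) gives exactly $2\mu\sin^2(\theta/2)$. The paper does not reprove this lemma---it quotes it from Kitaev and only uses it---and your argument is essentially the standard proof of that cited result, with Jordan's two-projector canonical form supplying the norm bound on $\op P_A+\op P_B$ that is usually stated directly.
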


This allows us to proof the lower bound, re-stated here for convenience:
\geo*
\begin{proof}
We use the notation from \cref{lem:kitaev}.
For us, $\op A=\Hin + \Hout$, and $\op B=\Hprop$,  and in this section we use the freedom to shift the energy in \cref{eq:ham} to set $E(\Hprop) = 0$ (since the system can be frustrated, this means the local terms may no longer be positive semi-definite, but this will not present a problem in applying the geometrical lemma above because $\Hprop$ itself is positive semi-definite).
Denote the projector onto the kernel of the penalty terms $\op A$ with  $\Pi_\mathrm{pen}:=\ketbra 0\otimes\Pin^\perp+\ketbra T\otimes\Pout^\perp+\sum_{t=2}^{T-1}\ketbra t\otimes\1$.
Denote with $\op U=\op U_T\cdots \op U_1$ the entire encoded quantum circuit.
We first want to bound the angle $\theta$ between the kernels of the propagation and penalty Hamiltonians\cite{Kitaev2002,CubittLectureNotes}.

\begin{align*}
	\cos^2\theta                                                 & =\max_{\substack{\ket{\xi}\in\ker \op A                                                                                                                                                             \\
	\ket{\eta}\in\ker \op B}}|\braket{\xi}{\eta}|^2              &  \\
	                                                             & =\max_{\substack{\ket{\xi}                                                                                                                                                                          \\
	\ket{\eta}\in\ker \op B}}|\bra\eta\Pi_\text{pen}\ket{\xi}|^2 &  \\
	                                                             & \overset*=\max_{\ket{\eta}\in\ker \op B}\bra\eta\Pi_\text{pen}\ket{\eta}                                                                                                                            \\
	                                                             & =\max_{\ket\eta\in\ker \op B}\bra\eta \op W^\dagger( \op W\Pi_\text{pen}
	\op U^\dagger) \op W\ket{\eta}                                                                                                    \\
	                                                             & =\max_{\ket{\eta'}\in\ker \op W\op B \op W^\dagger}\bra{\eta'}\left(\ketbra 0\otimes\Pin^\perp+\ketbra T\otimes \op U\Pout^\perp \op U^\dagger+\sum_{t=2}^{T-1}\ketbra t\otimes\1\right)\ket{\eta'} \\
	                                                             & =\max_{\ket{\phi}}\sum_{s=1}^T\psi_s^*\psi_s\bra s\bra\phi\left(\ketbra 0\otimes\Pin^\perp+\ketbra T\otimes \op U\Pout^\perp \op U^\dagger+\sum_{t=2}^{T-1}\ketbra t\otimes\1\right)\ket t\ket\phi  \\
	                                                             & =\max_{\ket{\phi}}\bra\phi(\abs{\psi_0^2}\Pin^\perp+\abs{\psi_T^2} \op U\Pout^\perp \op U^\dagger)\ket\phi + 1-\abs{\psi_0^2}-\abs{\psi_T^2},
\end{align*}
where we have saturated Cauchy-Schwartz in the third line $(*)$.
To bound the first inner product, we observe that if $\psi_0^2\ge\psi_T^2$, picking $\ket{\phi}\in\ker\Pin$ gives the bound
\begin{equation*}
	\max_{\ket{\phi}}\bra\phi(\pi_0\Pin^\perp+\pi_T \op U\Pout^\perp \op U^\dagger)\ket\phi \le \pi_0 + \pi_T\cos\vartheta, \end{equation*}
where $\vartheta$ is the angle between $\supp \Pin$ and $\supp \op U\Pout \op U^\dagger$.
This angle can be lower-bounded by the acceptance probability of the circuit:
\begin{equation*}
	\cos^2\vartheta = \hspace{-1.8cm}\max_{\subalign{\hspace{2cm}\ket\eta&\in\supp \Pin\\\ket\xi&\in\supp \op U\Pout \op U^\dagger}}\hspace{-.5cm} |\braket{\eta}{\xi}|^2 \le \hspace{-1.2cm}\max_{\subalign{\hspace{1.4cm}\ket\eta&\in\supp \Pin\\\ket\xi&\in\supp \Pout}} |\bra{\eta}\op U\ket{\xi}|^2 \le \epsilon.
\end{equation*}
Similarly, if $\psi_0^2<\psi_T^2$, one can show an upper bound of $\pi_0\cos\vartheta + \pi_T$.
We thus obtain an overall upper bound
\begin{align*}
	\cos^2\theta & \le \max\{\pi_0,\pi_T\} + \min\{\pi_0,\pi_T\}\sqrt\epsilon + 1 - \pi_0 - \pi_T \\ & \le 1  - \min\{\pi_0,\pi_T\}(1-\sqrt\epsilon).
\end{align*}
In Kitaev's lemma, we thus obtain a lower bound
\begin{equation*}
	2\sin^2\frac\theta2 \ge 2\times\frac{1-\cos^2\theta}{8\cos^2\theta} \ge \frac14\min\{\pi_0,\pi_T\}(1-\sqrt\epsilon),
\end{equation*}
and the claim follows.
\end{proof}
\end{proofsomething}

\subsection{Metropolis Hamiltonians with Target Ground State Distributions}\label{sec:metro}
\begin{proofsomething}{lem:mcham}
We assume the reader has some familiarity with Markov chain transition matrices; an introduction can e.g.\ be found in \cite{levin2009markov}.
We re-state the lemma as follows:
\mcham*
\begin{proof}
Given a probability distribution $\pi$ with support everywhere on its domain $\mathcal{S}= \{0,\ldots,T\}$, let $\op P$ be the Markov chain with Metropolis transition probabilities defined on $\mathcal{S}$,
\begin{equation}
	\op P_{t,t+1} = \frac{1}{4} \min\left\{1,\frac{\pi_{t+1}}{\pi_t}\right \} \quad , \quad \op P_{t,t-1}  = \frac{1}{4}\min\left\{1,\frac{\pi_{t-1}}{\pi_t}\right \} \quad \op P_{t,t} = 1 - \op P_{t,t+1} - \op P_{t,t-1}\label{eq:MetropolisTransitionProbabilities}
\end{equation}
for all $i \in \mathcal{S}$ (setting the expressions $\op P_{0,-1}$ and $\op P_{T,T+1}$ to zero) and $\op P_{t,t'} = 0$ for all $t,t'\in \mathcal{S}$ with $|t-t'| > 1$.
The choice of coefficient $1/4$ implies $\op P_{t,t} \geq 1/2$ for all $t$ and so $\op P$ is positive semi-definite.
The principal left eigenvector of this transition matrix $\op P := \sum_{t,t' \in \mathcal{S}} \op P_{t,t'}\ketbra{t}{t'}$ is $\bra\pi = \sum_{t} \pi_t \bra t$, and while $\op P$ is not a symmetric matrix, there is a well known similarity transformation that relates $\op P$ to a symmetric matrix, given by
\[
	\op A:= \sum_{t,t' \in \mathcal{S}}\pi_t^{1/2} \pi_{t'}^{-1/2} \op P_{t,t'}\ketbra{t}{t'}.
\]
The two matrices are related by the fact that if $\bra{v_0}, \cdots, \bra{v_T}$ are the left eigenvectors of $\op P$ with eigenvalues $\lambda_0 = 1 \geq \lambda_1 \geq , \ldots , \geq \lambda_T\geq 0$, then $\ket{w_i} := \sum_ {t\in \mathcal{S}} \braket{v_i}{t} \braket{t}{v_0}^{-1/2} \; \ket t$ satisfies $A \ket{w_i} = \lambda_i \ket{w_i}$.
Therefore $\op A$ has the same eigenvalues as $\op P$, and in particular it has largest eigenvalue 1 corresponding to the eigenvector $\ket{w_0}$ with components satisfying $\braket{t}{w_0} = \braket{t}{v_0}^{1/2} = \sqrt{\pi_t}$.
Therefore $\op H = \1 - \op A$ is a nonnegative Hermitian matrix with ground state that has energy zero and components $\sqrt{\pi_t}$ in the time register basis, as claimed.
\end{proof}
\end{proofsomething}

\paragraph{Markov chain spectral gaps.}
Substantial effort has been devoted to characterizing spectral gaps of Markov chains.
A particularly fruitful characterization proceeds by defining a quantity called the conductance,
\[
	\Phi := \min_{S \subseteq \Omega} \frac{Q(S,S^c)}{\min\{\pi(S),\pi(S^c)\}} \quad , \quad Q(S,S^c) := \sum_{x \in S , y \in S^c } \pi(x) P(x,y)
\]
which determines the spectral gap within a quadratic factor,
\begin{equation}
	\frac{\Phi^2}{2} \leq \Delta_{\op P} \leq 2 \Phi.
	\label{eq:cheeger}
\end{equation}
The lower bound in \cref{eq:cheeger} is known as Cheeger's inequality, and it was initially discovered in the analysis of manifolds~\cite{Cheeger1970} before being adapted to the setting of Markov chains~\cite{sinclair1989approximate}.
In the next section we will use this method to lower bound the spectral gap of the Metropolis Hamiltonian corresponding to a particular non-uniform stationary distribution.

\subsection{Explicit Construction of an $\Omega(T^{-2})$ UNSAT Penalty Circuit Hamiltonian}\label{sec:theo1proof}
\begin{proofsomething}{theo:main} Using techniques developed in \cref{lem:mcham}, we can proof the first of our main theorems, which we re-state.
\main*
\begin{proof}
Let $\Hprop$ be the Metropolis Hamiltonian from \cref{sec:metro} corresponding to the probability distribution
\[
	\pi_0 = \pi_T = \frac{1}{4} \quad \textrm{and} \quad \pi_t = \frac{1}{2(T-1)} \quad \textrm{for} \quad t = 1,\ldots,T-1.
\]
A few low energy eigenstates of $\Hprop$ are illustrated in \cref{fig:weights-markov}.
 Keeping with tradition~\cite{aharonov2002quantum, aharonov2008adiabatic, Aharonov2009}, we exhibit $\Hprop$ as a $T+1$ by $T+1$ matrix in the clock register basis,
\setstackgap{L}{32pt}
\setstacktabbedgap{-4pt}
\fixTABwidth{T}
\begin{equation}
	\label{eq:hammatrix} \Hprop = \dfrac12
	\parenMatrixstack{
		\dfrac{1}{T -1}             & - \dfrac{1}{\sqrt{2 T - 2}} & 0              &                & \cdots         &                           & 0                         \\
		- \dfrac{1}{\sqrt{2 T - 2}} & 1                           & - \dfrac{1}{2} & 0              &                &                           &                           \\
		0                           & - \dfrac{1}{2}              & 1              & \ddots         & \ddots         &                           & \vdots                    \\
		                            & 0                           & \ddots         & \ddots         & \ddots         & 0                         &  \\
		\vdots                      &                             & \ddots         & \ddots         & 1              & - \dfrac{1}{2}            & 0                         \\
		                            &                             &                & 0              & - \dfrac{1}{2} & 1                         & - \dfrac{1}{\sqrt{2T -2}} \\
		0                           &                             & \cdots         &                & 0              & - \dfrac{1}{\sqrt{2T -2}} & \dfrac{1}{T - 1}
	}.
\end{equation}

Since $\pi_0$ and $\pi_T$ are $\Omega(1)$ it only remains to be checked that the spectral gap $\Delta_{\Hprop}$ of the clock Hamiltonian is $\Omega(T^{-2})$.  The spectral gap between the lowest two eigenvalues of $\Hprop$ is equal to the spectral gap between 1 and the second largest eigenvalue of the Metropolis transition matrix $\op P$, so we can apply Cheeger's inequality \cref{eq:cheeger} to $\pi$ and $\op P$.

The goal is to show that every subset $S$ of $\{0,\ldots,T\}$ has large conductance, so we divide the proof into cases corresponding to the different possibilities for the subset $S$.
First if $S = \{0\}$ then since $\op P_{0,1} = (8T - 8)^{-1}$ so $\Phi(S)$ is $\Omega(T^{-1})$, with similar statements holding for $S = \{T\}$ and $S = \{0,T\}$.
Now if $S \subseteq \{1,\ldots,T-1\}$ is non-empty there must be at least one $t \in \{1,\ldots,T-1\} $ such that there is a $t \in S^c$ with $\op P_{t,t'} \geq 1/4$, and since $\pi_t = (2T -2)^{-1}$, this shows that $\Phi(S)$ is $\Omega(T^{-1})$ in this case as well.

Therefore $\Delta_{\op P}$ is $\Omega(T^{-2})$ by \cref{eq:cheeger}, and so $\Delta_{\Hprop} = \Omega(T^{-2})$ as well.  Using \cref{lem:geo} this concludes the proof.
\end{proof}
\end{proofsomething}

\section{Universal Adiabatic Computation}\label{sec:adiabatic}
The circuit-to-Hamiltonian construction used in the proof of QMA-completeness also plays a crucial role in the proof that adiabatic quantum computation (AQC) can simulate the quantum circuit model with polynomial overhead.
In this section we will review the relevant aspects of the standard universal AQC construction in order to explain how our results on modified Feynman Hamiltonians allow for a useful practical improvement in universal AQC.
Furthermore, we show that \cref{theo:tridiag} yields a lower bound on the time needed for universal AQC using modified Feynman Hamiltonians to simulate the circuit model.
\begin{figure}
	\includegraphics[width=\textwidth]{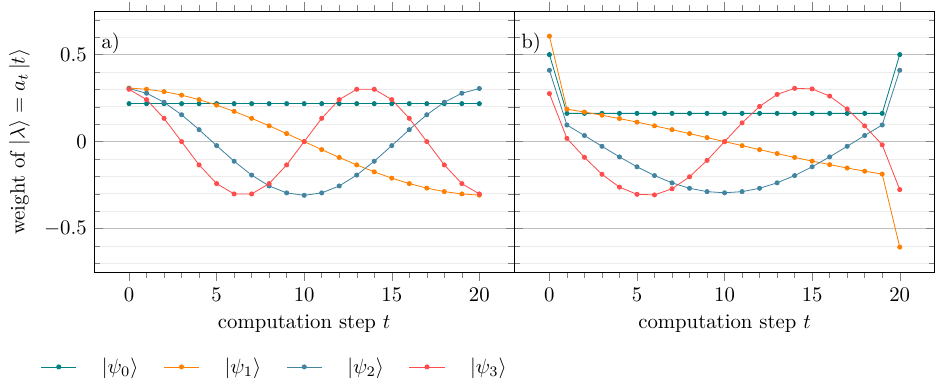} \caption{ Low energy eigenstates of \textbf{(a)} the standard circuit-to-Hamiltonian construction and \textbf{(b)} the Metropolis Hamiltonian corresponding to the distribution with $\pi_0,\pi_T = 1/4$ that is used in this section.
	}
	\label{fig:weights-markov}
\end{figure}
The standard version of universal AQC is based on $\Hprop$ as given in \cref{eq:standardHam}, together with
\newcommand{\Hinit}{\op H_\textnormal{init}}
\[
	\Hinit := \ketbra 0 \otimes \1 =
	\begin{pmatrix}
		0                       & 0      & \multicolumn{2}{c}{\cdots} & 0                       \\
		0                       & 1      & \ddots            &        & \multirow{2}{*}{\vdots} \\
		\multirow{2}{*}{\vdots} & \ddots & \ddots            & \ddots &  \\
		                        &        & \ddots            & 1      & 0                       \\
		0                       & \multicolumn{2}{c}{\cdots} & 0      & 1
	\end{pmatrix}
	\otimes \1.
\]
The goal in universal AQC is to prepare the ground state of $\Hprop$ by continuously varying a Hamiltonian $\op H(s)$ that depends on an adiabatic parameter $0 \leq s \leq 1$.
First, when $s = 0$, the system is initialized in the ground state of $\op H(0) := \Hinit$, which is a fiducial state that can be prepared easily.
Then the parameter $s$ is slowly increased until it reaches $s = 1$, at which point the system Hamiltonian is $\op H(1) := \Hprop$.
Defining $\|\dot{\op H}\| = \max_s \|\dd \op H / \dd s\|$ and $\Delta = \min_s \Delta_{\op H}(s)$, an often used condition
for preparing the ground state of $\Hprop$ is for the total evolution time to satisfy $t_\textrm{adiabatic} = \Theta(\|\dot{\op H}\| \Delta^{-2})$ (see~\cite{albash2018adiabatic} for a review of rigorous adiabatic theorems).

This time scale was shown~\cite{aharonov2008adiabatic} to be bounded by $\poly(n)$ for the linear interpolation schedule, i.e.
\begin{equation}\label{eq:standardAdiabatic}
	\op H(s) = (1-s) \Hinit + s \Hprop,
\end{equation}
where specifically $\Delta = \Omega(T^{-2})$ and $\|\dot{\op H}\| = O(1)$ so that $t_\textrm{adiabatic} = \Theta(T^4)$.
The lower bound on the spectral gap was proven using a quantum-to-classical mapping, a monotonicity property of the ground state wave function as a function of $s$, and Cheeger's inequality.

One may notice that measuring the time register of the uniform history state in \cref{eq:standardHistoryState} results in collapsing the computational register to its final time step $t = T$ with probability $O(T^{-1})$.
To avoid repeating the adiabatic evolution many times, it is desirable to boost this probability to $\Omega(1)$.
The standard trick for doing this is to pad the end of the computation with identity gates, meaning after performing the desired computational gates $\op U_1,\ldots,\op U_T$ one adds $r$ additional identity gates $\op U_r = \op U_{r+1} = \ldots = \op U_{r+T} = \1$, so that measuring $t\in [T,r+T]$ suffices to project the computational register to be in the end of the computation.

	\begin{figure}
		\centering
		\includegraphics[width=10cm]{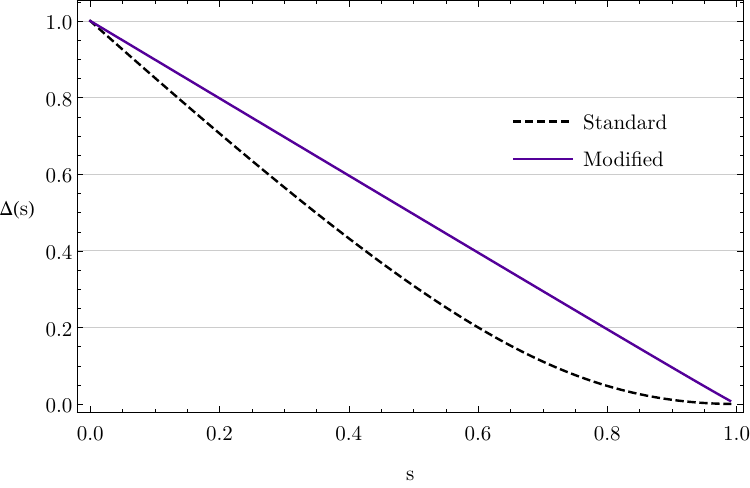}
		\caption{
			Comparison of the spectral gap as a function of the adiabatic parameter $s$ for the standard version of universal AQC with a $1/T$ probability of measuring the history state to be in the final time step of the computation and for the modified construction corresponding to a history state with probability $1/4$ of measuring the final time step of the computation.
			Here  $T = 100$ for both constructions, and they both follow the same adiabatic schedule with local terms of the same norm.
		}
		\label{fig:adiabatic}
	\end{figure}

We want to point out that modified Feynman Hamiltonians, together with the Metropolis Hamiltonian construction of \cref{sec:metro}, open up a new set of trade-offs in universal adiabatic computation that may be relevant for practical implementations.
Specifically, let $\op H^*_\textrm{prop}$ be the Hamiltonian used in \cref{sec:theo1proof}.  The $\Omega(1)$ probability on the endpoints can be used to increase the probability that measuring the time register will collapse the computational register of the system into the final time step of the computation.
This provides an alternative to the ``padding the end of the computation with identity gates'' technique that is normally used to raise the probability of sampling from the final time step of the computation.
Padding the length of the computation with identity gates is relatively expensive in practical terms when the time register is encoded using local interactions (such as the domain wall clock) because the clock must be represented in unary, meaning the number of clock qubits scales linearly with the total length of the (padded) computation.

Achieving an overlap of $\delta \approx 1 $ with the final step of the computation by padding the system with identity gates requires a total of $\BigO(T/(1-\delta))$ clock qubits, however one can instead prepare the weighted history state with $\pi_T = \delta$ using only $T$ clock qubits.
The price that one has to pay for this improvement is in an increase in the precision of the couplings needed to implement \cref{eq:ham} now must scale like $\BigO(T^{-1})$, as seen in \cref{eq:hammatrix}.
This is a reasonable trade off, however, since the total number of qubits is generally the limiting factor in most experiments.

\begin{proofsomething}{theo:adiabatic}
Having seen that it can be advantageous to prepare the non-uniform history state ground state of $\op H^*_\textrm{prop}$, it remains to be shown that this can be done efficiently, which is what the second of our theorems claims:
\adiabatic*
\begin{proof}
In \cref{sec:theo1proof} we proved that $\Delta_{\op H^*_\textrm{prop}} = \Theta(T^{-2})$, just as the minimum gap for the standard construction is $\Delta_{\op H(s = 1)} = \Theta(T^{-2})$.
A numerical comparison of the spectral gaps of $\op H(s)$ in \cref{eq:standardAdiabatic} and of $(1-s)\Hinit + s \Hprop^*(s)$ can be found in \cref{fig:adiabatic}, which reveals that both of these linear interpolations encounter their asymptotic minimum gap at $s = 1$.

However, it is of course desirable to bound the adiabatic run time for $\op H^*(s)$ without any recourse to numerics.
In this case the monotonicity of the ground state wave function used in the proof for the standard construction does not hold, and so we offer a different proof based on a modified linear interpolation schedule,
\begin{equation}\label{eq:modifiedAdiabatic}
	\op H^*(s) = \op H_0 + s A \op H^*_\textnormal{prop}.
\end{equation}
Choosing $A = T^4$ and an $\Omega(1)$ spectral gap for $0 \leq s \leq 1$, the standard estimate of the adiabatic run time $\Omega(\|\dot{\op H}\| \Delta_{\op H}^{-2})$ is then $\Omega(T^4)$---just as it will be for the usual version of universal AQC.
At $s = 1$, the system will be in the ground state of the Hamiltonian $\tilde{\op H}_\textnormal{final} = \op H^*_\textnormal{prop} + A^{-1} \op H_0$.  To bound the distance from this state to the target ground state of $\op H^*_\textnormal{prop}$ we apply the following general result from Weyl perturbation theory.
\begin{proposition}
	Let $\op H = \op H_0 + \lambda \op V$ be a perturbed system, with $\op V \succeq 0$ and the unperturbed Hamiltonian $\op H_0$ having a spectral gap $\Delta_0$ between its ground space and the rest of the spectrum.  For any ground state $|\psi\rangle$ of $\op H$ there is a ground state $|\psi_0\rangle$ of $\op H_0$ with fidelity
$$
|\langle \psi | \psi_0\rangle | ^2 \geq  1 - \frac{2 \lambda \|\op V\| } {\Delta_0}.
$$
\end{proposition}
In order to show the proposition, we start with Weyl's inequality.
The ground state and first excited energies of the perturbed system satisfy
$$
E_0 \geq E_0^0 - \lambda \|\op V\| ,   \quad , \quad E_1 \geq E_0^1 - \lambda \|\op V\|.
$$
Meanwhile, any original ground state $|\psi^0_0\rangle$ has $\langle \psi^0_0 | \op H | \psi^0_0\rangle \leq E^0_0 + \lambda \|\op V\|$.  We can express $|\psi^0_0\rangle$ as a combination of a ground state $|\psi\rangle$ of $\op H$ and an orthogonal state $|\psi_\perp\rangle$, 
$$
|\psi_0^0\rangle = \sqrt{1- \epsilon}|\psi\rangle + \sqrt{\epsilon} |\psi_{\perp}\rangle .
$$
The energy of this state satisfies $\langle \psi_0^0| \op H |\psi_0^0\rangle \leq  E^0_0 + \lambda \|\op V\|$, and so
\begin{align*}
 E^0_0 + \lambda \|\op V\|& \geq \langle \psi_0^0| \op H |\psi_0^0\rangle\\
&\geq (1 -\epsilon) E_0 + \epsilon E_1\\
& \geq (1 -\epsilon) (E_0^0 - \lambda \|\op V\| ) + \epsilon (E_0^1 - \lambda \|\op V\|)\\
& =  (E_0^0 - \lambda \|\op V\| ) + \epsilon (E_0^0 - E_0^1 ).
\end{align*}
Rearranging yields
$$
\epsilon \leq \frac{2 \lambda \|\op V\|} {E_0^0 - E_1^1}.
$$

Since $A = T^4$ we have $\left|\bra{\tilde{\psi}}\ket{\psi^*_0} \right|^2 = 1 - \BigO(T^{-2})$, as claimed.

To see that $\op H^*(s)$ in \cref{eq:modifiedAdiabatic} has $\Delta_{\op H^*(s)} = \Omega(1)$ for all $s$, first note that $\Delta_{\op H^*(0)} = 1$.
We want to show that the first excited energy is non-decreasing as a function of $s$.
In order to achieve this, we let $L$ be a large integer and discretize the adiabatic path into steps $s_0 = 0,s_1, \ldots, s_L = 1$, with a uniform step size $s_{i+1} = s_i + L^{-1}$.
Since $\op H^*(s_{i+1}) = \op H^*(s_i) + \left(\op H^*(s_{i+1}) - \op H^*(s_i)\right)$ and since $\op H^*(s_{i+1}) - \op H^*(s_i) = \Hprop^*$ is positive semi-definite, we can apply Weyl's inequality to see that the first excited state energy is indeed non-decreasing with $s$; because this holds for any $L$ this statement is independent of the discretisation of the path: we conclude that $E^*_1(s) \geq 1$.
Next we apply the variational method with a trial state $\ket\phi$ equal to the ground state of $\Hprop^*$, to see that $\bra{ \phi} \op H^*(s) \ket{ \phi} = \bra{ \phi } \op H_0 \ket{ \phi} = 3/4$, and so $\Delta_{\op H^*(s)} \geq 1/4$ for all $s$.
\end{proof}
\end{proofsomething}

\section{Tightness of the Geometrical Lemma for the UNSAT Penalty}\label{sec:tight-bounds}
\subsection{A Tight Bound for the Clock Hamiltonian}
In this section we prove that the UNSAT penalty of Kitaev's original construction is actually $\Omega(T^{-2})$, rather than the $\Omega(T^{-3})$ which can be shown using the geometrical lemma.  As noted in the introduction, this result was proven independently by \citeauthor{Caha2017}~\cite{Caha2017}, and appears to have been known for a while \cite{NagajTalk}.
We present a variant not reliant on Jordan's lemma in \cref{sec:direct-proof}.

The Hamiltonian in Kitaev's original proof is $\Hkit := \Hprop + \ketbra 0\otimes\Pin + \ketbra T\otimes\Pout$, where $\Hprop$ is given by \cref{eq:hamKitaev} with $a_0=a_T=1$ and $a_t=2$ else, and $b_t=-1$ for all $t$.
We denote the encoded circuit with $\op U=\op U_T\cdots\op U_1$.
The corresponding clock- and computation registers then live on the Hilbert space $\mathcal H:=(\field C^{T+1})\otimes(\field C^2)^{\otimes d}$ for some local dimension $d$.

As a first step, and to establish some background machinery, we focus on only the \emph{clock} part of the Hamiltonian, i.e.\ we disregard the encoded computation completely; the Hamiltonian we analyse has the form $\op H = \proj 0 + \Hclock + \proj T$.
Since $\op H$ is stoquastic, we can lower bound the ground state energy of $\op H$ by combining a suitable ansatz for the ground state with the following lemma.

\begin{lemma}[\citeauthor{Farhi2011}, \cite{Farhi2011}]
	\label{lem:lb}
	Let $\op H$ be a Hermitian operator which is stoquastic in the $\ket{t}$ basis (meaning $\bra{ t} \op{H}\ket{t'} \leq 0$ for all $t \neq t'$).
	Let $ E$ be its lowest eigenvalue.
	Then
	\begin{equation}
		E \ge \min_t \frac{\bra t \op H\ket\phi}{\braket{t}{\phi}}\label{eq:gap-lb}
	\end{equation}
	for any state $\ket\phi$ such that $\braket{t}{\phi} > 0$ for all $t$.
\end{lemma}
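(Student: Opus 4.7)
The plan is to reduce this to a Perron--Frobenius argument in the form of the Collatz--Wielandt bound. Let $r_{\min} := \min_t \bra{t}\op H\ket\phi/\braket{t}{\phi}$; by the very definition of $r_{\min}$, and using the positivity of every $\braket{t}{\phi}$, we get the entrywise inequality $\bra{t}\op H\ket\phi \ge r_{\min}\braket{t}{\phi}$ for every $t$. The goal is then to upgrade this coordinate-wise inequality into a statement about the spectrum by pairing $\op H\ket\phi$ against a carefully chosen non-negative vector, namely the ground state.

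First I would invoke Perron--Frobenius to guarantee the existence of a non-negative ground eigenvector of $\op H$. Concretely, because $\op H$ is stoquastic in the $\{\ket t\}$ basis, the matrix $\op M := c\1 - \op H$ has non-negative entries for any $c \ge \max_t \bra{t}\op H\ket t$, and its Perron eigenvalue $c - E$ is attained by an eigenvector $\ket\psi$ with $\braket{t}{\psi} \ge 0$ for every $t$ (strictly positive if the off-diagonal support is irreducible; otherwise obtained as the limit of Perron vectors of $\op M + \epsilon \op J$ for the all-ones matrix $\op J$ as $\epsilon \downarrow 0$). This $\ket\psi$ is also a ground state of $\op H$ with eigenvalue $E$, and by Hermiticity $\bra\psi \op H = E\bra\psi$.

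The rest of the argument is a one-line chain. Using $\bra\psi \op H \ket\phi = E\braket{\psi}{\phi}$ on the one hand, and the entrywise inequality together with $\braket{t}{\psi} \ge 0$ on the other, I expand
\begin{equation*}
E\braket{\psi}{\phi} \;=\; \bra\psi \op H \ket\phi \;=\; \sum_t \braket{\psi}{t}\bra{t}\op H\ket\phi \;\ge\; r_{\min}\sum_t \braket{\psi}{t}\braket{t}{\phi} \;=\; r_{\min}\braket{\psi}{\phi}.
\end{equation*}
Since $\braket{t}{\phi} > 0$ for all $t$ and $\braket{t}{\psi} \ge 0$ with $\ket\psi$ nonzero, the overlap $\braket{\psi}{\phi}$ is strictly positive, so dividing through yields the claimed bound $E \ge r_{\min}$.

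The main subtlety I expect to negotiate is the Perron--Frobenius step, specifically the possibility that $\op H$ is reducible in the $\{\ket t\}$ basis, in which case some components of $\ket\psi$ could vanish. This does not break the argument—the inequality $\braket{\psi}{\phi} \ge 0$ still holds and is actually strict because $\ket\psi \ne 0$ and every $\braket{t}{\phi}$ is positive—but it is the one place where a careless reader might worry about dividing by zero. Everything else is bookkeeping.
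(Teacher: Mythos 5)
Your proof is correct and complete. The paper does not supply its own proof of this lemma — it states it with attribution to Farhi et al.\ and cites the source — so there is no in-paper argument to compare against. What you have written is the standard Collatz--Wielandt argument: shift $\op H$ to a nonnegative matrix $c\1 - \op H$, invoke Perron--Frobenius to obtain a nonnegative ground eigenvector $\ket\psi$, and pair the coordinate-wise inequality $\bra{t}\op H\ket\phi \ge r_{\min}\braket{t}{\phi}$ against $\bra\psi$, whose nonnegativity prevents any sign flips. Your observation that strict positivity of $\braket{\psi}{\phi}$ follows from $\ket\psi\ne 0$ and $\braket{t}{\phi}>0$ even when $\op H$ is reducible (so that $\ket\psi$ need not be strictly positive) is exactly the right care to take, and it closes the only genuine subtlety. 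This is essentially the argument used in the cited source, so there is nothing to reconcile.
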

Noting that the state $\ket\phi$ in Lemma \cref{lem:lb} does not need to be normalized, define
\[
	\ket\phi := \sum_{t = 0}^T \sin \left(\frac{\pi (t+1)}{T+2} \right ) \ket t .
\]

The bound \cref{eq:gap-lb} can be evaluated using three cases.
The first case is $t = 0$, which yields
\begin{align*}
	\frac{\bra 0 \op H\ket\phi}{\braket{ 0 }{ \phi}} & = \left [\sin\left(\frac{\pi}{T+2}\right) \right]^{-1} \left[2\sin\left (\frac{\pi}{T+2} \right ) - \sin\left (\frac{2 \pi}{T+2}\right ) \right] \\ &= 2 \left (1 - \cos \left (\frac{\pi}{T+2} \right) \right ) \\ &= \frac{\pi^2}{T^2} -  \BigO(T^{-3}) .
\end{align*}

The next case is $1 \leq t \leq T - 1$,
\begin{align*}
	\frac{\bra t \op H\ket\phi}{\braket{ t }{ \phi}} & = \left [\sin\left(\frac{\pi(t+1)}{T+2}\right) \right]^{-1} \left [2  \sin\left(\frac{\pi(t+1)}{T+2}\right) -  \sin\left(\frac{\pi t}{T+2}\right) -  \sin\left(\frac{\pi(t+2)}{T+2}\right) \right ] \\ &=  2 \left (1 - \cos \left (\frac{\pi}{T+2} \right) \right ) \\ &= \frac{\pi^2}{T^2} -  \BigO(T^{-3}) .
\end{align*}

The final case is $t = T$,
\begin{align*}
	\frac{\bra T \op H\ket\phi}{\braket{ T }{ \phi}} & = \left [\sin\left(\frac{\pi(T+1)}{T+2}\right) \right]^{-1}\left[2\sin\left (\frac{\pi(T+1)}{T+2} \right ) - \sin\left (\frac{\pi T}{T+2}\right ) \right] \\ &=2-\sin \left(\frac{\pi  T}{T+2}\right) \csc \left(\frac{\pi  (T+1)}{T+2}\right) \\ &= \frac{\pi^2}{T^2} -  \BigO(T^{-3})
\end{align*}
We have thus shown that $\bra{ t } \op H \ket{\phi} / \braket{ t }{ \phi }$ is $\Omega(T^{-2})$ for all $t$, and so applying  \cref{lem:lb} we can conclude that $E(\op H)$ is $\Omega(T^{-2})$.

\subsection{A Tight Bound for the Full Circuit Hamiltonian}
\subsubsection{Block-Decomposition of the Full Circuit Hamiltonian}\label{sec:jordan-proof}
For the purpose of this discussion, we will assume that $\rank\Pin,\rank\Pout \ge d/2$; this is a natural assumption e.g.\ if $\Pin=\ketbra 1^{(1)}\otimes\1^{(2)}\otimes\cdots\otimes\1^{(d)}$ just penalizes the first qubit in a state $\ket1$.
Note that more penalized qubits generally \emph{increase} the rank of $\Pin$ if we demand the penalties to be local operators (i.e.\ not something like $\ketbra1 \otimes \ketbra 0 \otimes \cdots$, but $\ketbra1\otimes\1 + \1\otimes\ketbra0\otimes\1 + \ldots$), so this assumption is justified.

We know that there exists a global unitary $\op W$ on $\mathcal H$ such that $\spec(\Hprop)=\spec(\op W^\dagger\Hprop\op W)=\spec(\Delta)$ up to multiplicities, where $\Delta$ is the Laplacian of a path graph of $T+1$ vertices:
\begin{equation*}
	\Delta=
	\begin{pmatrix}
		1                       & -1 & 0                     & \multicolumn{2}{c}{\cdots} & 0                       \\
		-1                      & 2  & -1                    &        &                   & \multirow{2}{*}{\vdots} \\
		0                       & -1 & \ddots                & \ddots &                   &  \\
		\multirow{2}{*}{\vdots} &    & \ddots                & \ddots & -1                & 0                       \\
		                        &    &                       & -1     & 2                 & -1                      \\
		0                       & \multicolumn{2}{c}{\cdots} & 0      & -1                & 1
	\end{pmatrix}
\end{equation*} Then
\begin{equation}
	\label{eq:hamKitaev} \op W^\dagger\Hkit\op W = \underbrace{\Delta\otimes\1 + \ketbra0\otimes\Pin}_{:=\op A} + \underbrace{\ketbra T\otimes\op U^\dagger\Pout\op U}_{:=\op B}.
\end{equation}
We choose to work in a basis where $\op A$ takes the form
\begin{equation}
	\label{eq:A} \op A=\operatorname{diag}(\underbrace{\Delta',\ldots,\Delta'}_{\rank\Pin\ \text{times}},\Delta,\ldots,\Delta),
\end{equation}
and where $\Delta'=\Delta + \ketbra 0$.
Note that this is always possible: we simply re-order the computational register such that $\Pin$ penalizes the first $\rank\Pin$ states.
Also note that $\op B=\ketbra T\otimes\op U^\dagger\Pout\op U$ does not have a simple form in this basis (apart from having only a single entry \emph{within} each block in the time register basis, i.e.\ at the diagonal entry $\ketbra T$).

If we na\"{\i}vely try to diagonalize $\op B$, we will again mix up the nice block-diagonal form in \cref{eq:A}; our goal is thus to find a unitary $\op V=\1_d\otimes(\op V'\oplus\op V'')$ which respects the block-diagonal structure of $\op A$, i.e.\ in particular leaves the upper left block invariant (which is automatically the case if $\dim\op V'\le\rank\Pin$).

While Jordan's original paper addresses the case of orthogonal transformations between subspaces, we can phrase it more suitable to our needs:
\begin{lemma}[Jordan \cite{Jordan1875}, Th.~1]\label{lem:jordan}
	Let $\Pi_0$ and $\Pi_1$ be two Hermitian projectors in some Hilbert space \H.
	Then there exists a decomposition of $\H$ into one- and two-dimensional subspaces $\H=\bigoplus_i \H_i$, such that $\H_i$ is invariant under both $\Pi_0$ and $\Pi_1$, and such that $\rank\Pi_j|_{\H_i}\le 1$ for all $j$ and $i$.
\end{lemma}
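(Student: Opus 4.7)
The plan is to construct the decomposition by diagonalising the Hermitian operator $\op M := \Pi_0 \Pi_1 \Pi_0$ on $V_0 := \operatorname{Im}\Pi_0$, using its eigenvectors to build 1D and 2D invariant subspaces, and then mopping up the remainder of $V_0^\perp$ with an orthogonal complement argument. Since $0 \preceq \Pi_1 \preceq \1$, compression by $\Pi_0$ yields $0 \preceq \op M \preceq \Pi_0$, so $\op M$ annihilates $V_0^\perp$ and has spectrum in $[0,1]$ on $V_0$. I would take an orthonormal eigenbasis $\{\ket{v_i}\}$ of $V_0$ with eigenvalues $\lambda_i \in [0,1]$ and split the analysis into three cases by the value of $\lambda_i$.

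The key identity is $\|\Pi_1\ket{v_i}\|^2 = \bra{v_i}\Pi_1\ket{v_i} = \bra{v_i}\op M\ket{v_i} = \lambda_i$, which uses $\Pi_0\ket{v_i} = \ket{v_i}$. For $\lambda_i = 0$ this forces $\Pi_1\ket{v_i} = 0$, so $\ket{v_i} \in V_0 \cap V_1^\perp$ spans a 1D invariant subspace; for $\lambda_i = 1$, saturation of $\|\Pi_1\ket{v_i}\| = 1$ gives $\ket{v_i}\in V_0\cap V_1$, again a 1D invariant subspace, with the rank bound $\rank\Pi_j|_{\H_i}\le 1$ holding trivially. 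For $\lambda_i \in (0,1)$ I would set $\H_i := \operatorname{span}\{\ket{v_i},\Pi_1\ket{v_i}\}$; the two vectors are linearly independent because the $V_0^\perp$-component of $\Pi_1\ket{v_i}$ has squared norm $\lambda_i(1-\lambda_i)\ne 0$. Invariance reduces to the four checks $\Pi_0\ket{v_i}=\ket{v_i}$, $\Pi_0\Pi_1\ket{v_i}=\lambda_i\ket{v_i}$, $\Pi_1\ket{v_i}\in\H_i$, $\Pi_1^2\ket{v_i}=\Pi_1\ket{v_i}$, and a direct calculation in the orthonormal basis $\{\ket{v_i},((\1-\Pi_0)\Pi_1\ket{v_i})/\sqrt{\lambda_i(1-\lambda_i)}\}$ shows that each of $\Pi_0|_{\H_i}$ and $\Pi_1|_{\H_i}$ is a rank-one projection.

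The blocks constructed so far are mutually orthogonal: for $i\ne j$ one computes $\bra{v_i}\Pi_1\ket{v_j} = \bra{v_i}\Pi_0\Pi_1\ket{v_j} = \lambda_j\braket{v_i}{v_j} = 0$, which kills all cross inner products both between and within the blocks. Their direct sum contains all of $V_0$, so the joint orthogonal complement $W$ lies inside $V_0^\perp$ and is itself invariant under $\Pi_0$ and $\Pi_1$ because both projectors are Hermitian. On $W$ the projector $\Pi_0$ vanishes identically, while $\Pi_1|_W$ remains a Hermitian idempotent, so a final diagonalisation splits $W = (W\cap V_1)\oplus(W\cap V_1^\perp)$ into 1D eigenspaces trivially invariant under both projectors, completing the decomposition.

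The routine part is the invariance checks and the rank-one verification inside each 2D block. The genuinely delicate point is exhaustiveness of the family $\{\H_i\}$, which is what motivates the orthogonal complement argument above rather than an attempt to parametrise $V_0^\perp$ directly by repeating the spectral analysis with $\Pi_1\Pi_0\Pi_1$ on $V_1$; the latter would require an additional bijection argument relating the nonzero spectra of the two compressions.
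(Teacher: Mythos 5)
Your proof is correct. Note that the paper does not actually supply an argument for this lemma --- it only cites Jordan and remarks that ``the proof is standard'' --- so there is nothing to compare line by line; what you have written is precisely the standard two-projections argument that the citation is pointing to: diagonalise the compression $\Pi_0\Pi_1\Pi_0$ on $\operatorname{Im}\Pi_0$, peel off the $\lambda=0$ and $\lambda=1$ eigenvectors as one-dimensional invariant lines, pair each eigenvector with $\lambda\in(0,1)$ with the normalised $(\1-\Pi_0)\Pi_1\ket{v_i}$ to get a two-dimensional block on which both restrictions are rank-one projections, and handle the leftover subspace of $\operatorname{Im}\Pi_0^\perp$ by invariance of orthogonal complements under Hermitian operators. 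All the computations you flag as routine do check out: $\Pi_0\Pi_1\ket{v_i}=\lambda_i\ket{v_i}$ gives both the block invariance and the mutual orthogonality $\bra{v_i}\Pi_1\ket{v_j}=\lambda_j\braket{v_i}{v_j}=0$, and in your orthonormal block basis $\Pi_1$ restricts to the matrix with entries $\lambda_i$, $1-\lambda_i$ on the diagonal and $\sqrt{\lambda_i(1-\lambda_i)}$ off it, which is idempotent of rank one. The only caveat worth stating is dimensional: the lemma is phrased for ``some Hilbert space'', but your argument (and the direct-sum statement itself) needs an orthonormal eigenbasis of $\Pi_0\Pi_1\Pi_0$, i.e.\ finite dimension or at least pure point spectrum of the compression; in infinite dimensions the correct general statement is a direct-integral (Halmos two-subspaces) decomposition. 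Since every application in the paper is on finite-dimensional spaces, this does not affect anything here, but you could add one sentence restricting to that setting.
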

The proof is standard.
\Cref{lem:jordan} allows us to state the following corollary.
\begin{corollary}
	\label{cor:tech}
	Let $\Pi_0$ and $\Pi_1$ be projectors with $\rank \Pi_0=\rank \Pi_1=d/2$, and $\rank(\Pi_0+\Pi_1)=d$ is full rank.
	Then there exists a unitary $\op V$ such that $\op V^\dagger\Pi_0\op Vr=\1_{d/2}\oplus0$.
	Furthermore,
	\begin{equation*}
	\op V^\dagger\Pi_1\op V=
	\begin{pmatrix}
	\op M_{aa} & \op M_{ab} \\ \op M_{ba} & \op M_{bb}
	\end{pmatrix},
	\end{equation*}
	where each $d/2\times d/2$ block $\op M_{ij}$ is diagonal and full rank, and $\op M_{ab}=\op M_{ba}<0$.
\end{corollary}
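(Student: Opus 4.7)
}
The plan is to derive the corollary as a direct consequence of Jordan's \cref{lem:jordan} applied to the pair $(\Pi_0,\Pi_1)$, followed by a careful reordering of the resulting basis. First I would enumerate the invariant subspaces produced by the lemma: each is either one-dimensional (on which each projector acts as $0$ or $1$) or two-dimensional (on which $\Pi_0$ and $\Pi_1$ are rank-one projectors onto two lines at some angle $\theta_i\in(0,\pi/2)$). Label the multiplicities of the one-dimensional types by $n_{11}, n_{10}, n_{01}, n_{00}$ (where the two indices record the eigenvalues of $\Pi_0$ and $\Pi_1$) and the number of two-dimensional blocks by $n_2$.

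Next I would use the hypothesis $\rank(\Pi_0+\Pi_1)=d$ to conclude $n_{00}=0$, and combine this with $\rank\Pi_0=\rank\Pi_1=d/2$ and $n_{11}+n_{10}+n_{01}+2n_2=d$ to deduce $n_{11}=0$ and $n_{10}=n_{01}$. After this accounting step, I would pick an orthonormal basis adapted to the Jordan decomposition as follows: within each two-dimensional block, take the first vector along $\supp\Pi_0$ and the second orthogonal to it, freely choosing its phase; for the one-dimensional blocks, take the (unique up to phase) basis vector. Ordering the full basis by first listing all vectors on which $\Pi_0$ acts as identity (first coordinates of the $n_2$ two-dimensional blocks together with the $n_{10}$ blocks of type $(1,0)$) and then all vectors on which $\Pi_0$ acts as zero, one obtains immediately $\op V^\dagger\Pi_0\op V=\1_{d/2}\oplus 0$.

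It then remains to read off $\op V^\dagger\Pi_1\op V$ in this basis. On a two-dimensional block with angle $\theta_i$ the matrix of $\Pi_1$ has entries $c_i^2, s_i^2, c_i s_i, c_i s_i$ with $c_i=\cos\theta_i$, $s_i=\sin\theta_i$, while on a $(1,0)$ or $(0,1)$ block $\Pi_1$ contributes $0$ or $1$ on the diagonal. If the two halves of the ordered basis are arranged so that the first coordinate of each two-dimensional block in the upper half sits in the same relative position as its second coordinate in the lower half, then all four of the blocks $\op M_{aa}, \op M_{ab}, \op M_{ba}, \op M_{bb}$ come out diagonal, with $\op M_{ab}=\op M_{ba}$ having entries $c_i s_i$. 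Finally, absorbing a sign by flipping the phase of the second basis vector within each two-dimensional block turns these entries into $-c_i s_i<0$, yielding $\op M_{ab}<0$.

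The main obstacle is the assertion that each $\op M_{ij}$ is \emph{full rank}: under the stated hypotheses there can in principle be $(1,0)$ and $(0,1)$ blocks, which contribute zeros to the diagonals of $\op M_{aa}$ (respectively $\op M_{bb}$) and $\op M_{ab}$. I would address this either by strengthening the hypothesis to $\supp\Pi_0\cap\ker\Pi_1=\ker\Pi_0\cap\supp\Pi_1=\{0\}$ (the generic situation for the input/output penalties considered in \cref{sec:jordan-proof}), which kills the $(1,0)$ and $(0,1)$ blocks entirely and forces every Jordan block to be two-dimensional, or by restricting attention to the invariant subspace spanned by the two-dimensional blocks on which the subsequent spectral analysis actually takes place. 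Either way the remaining block structure is exactly the one claimed, and the sign adjustment above completes the argument.
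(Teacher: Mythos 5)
Your proposal follows essentially the same route as the paper's proof: apply Jordan's \cref{lem:jordan}, reorder the adapted basis so that $\Pi_0 = \1_{d/2}\oplus 0$, read off the block form of $\Pi_1$, and remove the phases of the off-diagonal entries by conjugating with a diagonal unitary that leaves $\Pi_0$ invariant; your multiplicity bookkeeping ($n_{00}=n_{11}=0$, $n_{10}=n_{01}$) just makes explicit what the paper leaves implicit. Your caveat about the full-rank claim is also well taken: the stated hypotheses do not exclude one-dimensional $(1,0)$ and $(0,1)$ Jordan blocks (e.g.\ $\Pi_0=\ketbra{0}$, $\Pi_1=\ketbra{1}$ for $d=2$), so ``each $\op M_{ij}$ is diagonal \emph{and full rank}'' and ``$\op M_{ab}<0$'' are too strong as written, and the paper's own proof does not establish them either---\cref{eq:matrix} only asserts diagonality of the blocks, with entries that may vanish. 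This imprecision is harmless downstream: in the proof of \cref{theo:kitaev} only the lower bound $\mu_i\ge 1-\epsilon$ on the $\op M_{bb}$ entries (via \cref{lem:fullrank} and the \no-instance assumption) is used, while $\lambda_i$ and $\xi_i$ are permitted to be zero, which is exactly the situation your two proposed fixes (strengthening the hypothesis to $\supp\Pi_0\cap\ker\Pi_1=\ker\Pi_0\cap\supp\Pi_1=\{0\}$, or restricting to the two-dimensional blocks) are designed to handle.
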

\begin{proof}
	Applying \cref{lem:jordan} to $\Pi_0$ and $\Pi_1$, we know that there exists a basis in which $\Pi_0$ is diagonal, and $\Pi_1=\bigoplus_i\op M_i$, where the $\op M_i$ are $2\times 2$ or $1\times 1$ Hermitian matrices.
	Re-order the basis again such that $\Pi_0=\1_r\oplus0$ with $r=d/2$; we denote the unitary transformation from Jordan's lemma with this  reordering as $\op V$.
	
	Under the same re-ordering, the matrix $\op V^\dagger\Pi_1\op V$ then necessarily has the form
	\begin{equation}
	\label{eq:matrix} \op V^\dagger\Pi_1\op V=
	\begin{pmatrix}
	\op M_{aa} & \op M_{ab} \\ \op M_{ba} & \op M_{bb}
	\end{pmatrix}
	=
	\begin{tikzpicture}
	[baseline] \node at (0,0) {$\begin{pmatrix}
		\begin{array}
		{cccc@{\hskip 8mm}cccc} \lambda_1 & 0         & \cdots & 0         & \xi_1  & 0      & \cdots & 0      \\
		0                                 & \lambda_2 & \ddots & \vdots    & 0      & \xi_2  & \ddots & \vdots \\
		\vdots                            & \ddots    & \ddots & 0         & \vdots & \ddots & \ddots & 0      \\
		0                                 & \cdots    & 0      & \lambda_r & 0      & \cdots & 0      & \xi_r  \\[5mm]
		\xi_1^*                           & 0         & \cdots & 0         & \mu_1  & 0      & \cdots & 0      \\
		0                                 & \xi_2^*   & \ddots & \vdots    & 0      & \mu_2  & \ddots & \vdots \\
		\vdots                            & \ddots    & \ddots & 0         & \vdots & \ddots & \ddots & 0      \\
		0                                 & \cdots    & 0      & \xi_r^*   & 0      & \cdots & 0      & \mu_r
		\end{array} \end{pmatrix}$};
	\draw[dashed,gray] (3.1,0) -- (-3.1,0) (0,-2.4) -- (0,2.4);
	\end{tikzpicture}
	\end{equation}
	A further global phase transformation (adjoining with a diagonal unitary, which leaves $\Pi_0$ invariant) allows us to assume that the $\xi_i=-|\xi_i|$, and the claim follows.
\end{proof}

We still need to show that the resulting Hamiltonian is in particular a graph Laplacian connecting the $T$\textsuperscript{th} entry in every block of $\Delta$ in \cref{eq:A} with the $T$\textsuperscript{th} entry in at least one block of $\Delta'$, which is where the input penalty terms are located.
Naturally, this will depend on the encoded circuit, and---using notation from the last proof---on the respective ranks of $\op M_{aa}$ and $\op M_{bb}$.
For this, we state the following lemma.

\begin{lemma}
	\label{lem:fullrank}
	Let $\op U$ encode a \no instance.
	Then $\Pin+\op U^\dagger\Pout\op U$ has full rank.
\end{lemma}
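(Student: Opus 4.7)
The plan is to prove the contrapositive: if $\Pin + \op U^\dagger \Pout \op U$ fails to have full rank, then $\op U$ accepts some valid input with probability one, contradicting the \no instance hypothesis. The whole argument is short; the content is essentially the observation that the sum of two positive semi-definite operators has a nontrivial common kernel if either operator does.

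First I would note that $\Pin$ is a projector and $\op U^\dagger \Pout \op U$ is a unitary conjugate of a projector, so both are positive semi-definite. Hence their sum is positive semi-definite, and any $\ket\phi \in \ker(\Pin + \op U^\dagger \Pout \op U)$ satisfies $\bra\phi \Pin \ket\phi = 0$ and $\bra\phi \op U^\dagger \Pout \op U \ket\phi = 0$ separately. Since each term is PSD, this forces
\[
    \Pin \ket\phi = 0 \quad \textnormal{and} \quad \Pout \op U \ket\phi = 0,
\]
i.e. $\ket\phi \in \ker \Pin$ and $\op U \ket\phi \in \ker \Pout$.

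Next I would translate this back into the language of acceptance probabilities from the beginning of the excerpt. Assuming $\ket\phi$ is normalized, taking $\ket\eta := \op U \ket\phi \in \ker \Pout$ as a witness in the definition of $\epsilon$ gives
\[
    \epsilon \ge |\bra\eta \op U \ket\phi|^2 = |\bra\phi \op U^\dagger \op U \ket\phi|^2 = 1,
\]
so the circuit accepts with probability one on the valid input $\ket\phi$. This contradicts the \no instance assumption, which requires the maximum acceptance probability over $\ker \Pin$ to be bounded strictly below one. Therefore $\ker(\Pin + \op U^\dagger \Pout \op U) = \{0\}$, and the operator has full rank.

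There is no real obstacle here; the only subtlety is reading off the correct semantics of $\Pin$ and $\Pout$ (penalties vanish on the valid/accepting subspace), which is already fixed by the paper's conventions introduced around the definition of $\epsilon$ and $C(\epsilon,T)$. Strict separation $\epsilon < 1$ is all that is required, so the argument applies to any \no instance regardless of the particular completeness--soundness gap.
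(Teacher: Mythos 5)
Your proposal is correct and follows the same route as the paper, which simply asserts that a nonzero vector in $\ker(\Pin+\op U^\dagger\Pout\op U)$ would give a state accepted with probability one, contradicting the \no-instance assumption; you have merely filled in the (straightforward) PSD/common-kernel details that the paper leaves implicit.
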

\begin{proof}
	It is easy to see that if $\ker(\Pin+\op U^\dagger\Pout\op U)\neq0$, there would be a state that can be accepted with perfect probability---contradicting the assumption that $\op U$ encodes a \no instance.
\end{proof}

\begin{figure}[t]
	\centering
	\newcommand{\ux}{1.4 cm}
	\begin{tikzpicture}
		\matrix (magic) [
			matrix of math nodes,
			nodes = {
					minimum size = \ux
				},
			left delimiter=(,
			right delimiter=),
			ampersand replacement=\&
		]
		{
			\Delta \&  \&  \& \& \& \&\& \\
			\& \Delta \& \&  \& \& \&\&  \\
			\&  \& \Delta \& \& \& \&\&  \\
			\& \& \& \Delta \& \& \& \&  \\
			\& \& \& \& \Delta \& \&\&   \\
			\& \& \&\& \& \Delta  \&     \\
			\& \& \&\& \& \& \Delta \&   \\
			\& \& \&\& \& \& \&\Delta    \\
		};
		\begin{scope}[
				xshift=-4*\ux,
				yshift=4*\ux,
				x=\ux,
				y=-\ux
			]
			\foreach \i in {1,...,7} {
					\draw[draw=black!20] (\i, 0) -- (\i, 8) (0, \i) -- (8, \i);
				};
			\draw (0, 4) -- (4, 4) -- (4, 8) (4, 0) -- (4, 4) -- (8, 4);

			\foreach \i in {1,...,4} {
					\node[scale=.66,blue] at (\i-1+.07, \i-1+.1) {$1$};

					\node[scale=.66,red] at (\i -.1, \i -.1) {$\lambda_{\i}$};
					\node[scale=.66,red] at (\i + 4 -.1, \i + 4 -.1) {$\mu_{\i}$};
					\node[scale=.66,red] at (\i -.2, \i + 4 -.1) {$-|\xi_{\i}|$};
					\node[scale=.66,red] at (\i + 4 -.2, \i -.1) {$-|\xi_{\i}|$};

					\draw[red!40] (\i, \i - .1) -- (\i + 4 - .4, \i - .1) (\i + 4 - .1, \i) -- (\i + 4 - .1, \i + 4 - .2);
					\draw[red!40] (\i - .1, \i) -- (\i - .1, \i + 4 - .2) (\i, \i + 4 - .1) -- (\i + 4 - .2, \i + 4 - .1);
				};

			\draw[->|] (-.5,.3) -- (-.5,0);
			\draw[->|] (-.5,.7) -- (-.5,1);
			\node at (-.5,.5) {$T$};
			\draw[->|] (-.8,3.8) -- (-.8,0);
			\draw[->|] (-.8,4.2) -- (-.8,8);
			\node at (-.8, 4) {$d$ blocks};
			\draw[->|] (8.5,1.6) -- (8.5,0);
			\draw[->|] (8.5,2.4) -- (8.5,4);
			\node at (8.5 + .3, 2) {\shortstack[l]{$\rank\Pin$\\blocks}};
		\end{scope}
	\end{tikzpicture}
	\caption{Sketch of matrix $(\1\otimes\op V)^\dagger(\Delta\otimes\1 + \ketbra1\otimes\Pin +\ketbra T\otimes M)(\1\otimes\op V)$.}
	\label{fig:matrix}
\end{figure}

This technical machinery allows us to prove \cref{theo:kitaev} in the next section.

\subsubsection{An $\Omega(T^{-2}$) Promise Gap for Kitaev's Hamiltonian}
Remember that we have to show that whenever the encoded circuit $\op U$ encodes a \no instance, then $\lmin(\Hkit)=\Omega(T^{-2})$.
\begin{proofsomething}{theo:kitaev} We remind the reader of the precise claim:
\kitaev*
\begin{proof}
	Since the circuit is a \no-instance, there exists a minimal acceptance probability for any (valid) input $\ket x\in\ker\Pin$, which we denote with $\epsilon$.
	By \cref{lem:fullrank}, we know that we can apply \cref{cor:tech} to $\Pin$ and $\op U^\dagger\Pout\op U$;  the unitary transformation bringing the latter into a form \cref{eq:matrix} while leaving $\Pin=\1\oplus 0$ we will denote with $\op V$.

	Now perform this similarity transform on \cref{eq:hamKitaev}, i.e.
	\begin{equation*}
		(\1\otimes\op V^\dagger)\op W^\dagger\Hkit\op W(\1\otimes\op V)= \Delta\otimes 1+\proj 0\otimes (\1\oplus 0)+\proj T\otimes (\op V^\dagger\op U^\dagger\Pout\op U\op V).
	\end{equation*}
	The resulting matrix is depicted in \cref{fig:matrix}.
	Since $\op U$ encodes a quantum circuit, we can immediately calculate the magnitudes of the $\lambda_i,\mu_i$ and $\xi_i$'s (see \cref{eq:matrix,fig:matrix} as a reference).

	\begin{enumerate}
		\item If $\ket x\in\field C^d$, then $\bra x\op U^\dagger\Pout\op U\ket x=\|\Pout\op U\ket x\|^2$ is the probability that the circuit $\op U$ rejects input $\ket x$.
		\item If $\ket x\in\ker\Pin$---i.e.\ for the $\op M_{bb}$ block---$\bra x\op U^\dagger\Pout\op U\ket x\ge1-\epsilon$.
		      Otherwise we have no non-trivial lower bound.
		\item Assuming $\dim\supp\Pout=d/2$ (discard the rest), we can thus immediately conclude that each matrix block
		      \begin{equation*}
			      \op P_i =
			      \begin{pmatrix}
				      \lambda_i & -|\xi_i| \\ -|\xi_i| & \mu_i
			      \end{pmatrix}
		      \end{equation*}
		      is of $\rank\op P_i=1$ with $1\ge\mu_i\ge 1-\epsilon$.
		\item Since the set of eigenvalues of a matrix is unique, we could continue diagonalizing $\op V^\dagger\Pout\op V$ by further diagonalizing each block $\op P_i$ separately; the result would be the same overall matrix as if we had diagonalized $\Pout$ in one step.
		      Since $\Pout$ is a psd projector, we thus know that each of the $\op P_i$ has to be a psd projector, and we can conclude
		      \begin{equation*}
			      \op P_i =
			      \begin{pmatrix}
				      \eta_i^2\mu_i & -\eta_i\mu_i \\ -\eta_i\mu_i & \mu_i
			      \end{pmatrix}
		      \end{equation*} for some $\eta_i:=\lambda_i/\mu_i\ge0$.
		      Then $1=\tr\op P_i=\mu_i(1+\eta_i^2) \ge (1-\epsilon)(1+\eta_i^2)$, and thus
		      \begin{equation*}
			      \eta_i \le \sqrt{\frac{1}{1-\epsilon}-1} = \sqrt{\frac{\epsilon}{1-\epsilon}} \le \sqrt{\frac\epsilon2},
		      \end{equation*}
		      and therefore $\lambda_i =\eta_i^2\mu_i \le \epsilon\mu_i/2 \le \epsilon/2$, and $|\xi_i|=\eta_i\mu_i \le \sqrt{\epsilon/2}$.
		\item From $\mu_i\le1$, one can obtain a trivial bound $1\le1+\eta_i^2$, and thus obviously $\lambda_i\ge0$.
	\end{enumerate}

	\noindent
	What remains to be analyzed is the spectrum of each of the pairs of blocks in \cref{fig:matrix}, i.e.\ blocks of the form $\Delta+\Delta'+$ matrix $\op P_i$ spread to the submatrix indexed by $T,2T$; more precisely, each block will have twice the size of $\Delta$, and we can write it as a stoquastic matrix on $\field C^2\otimes\field C^{T+1}$:
	\begin{equation}
		\label{eq:stoq-double-delta}
		\op S = (\ketbra 0 + \ketbra 1)\otimes\Delta + \ketbra 0\otimes\ketbra 0 + \underbrace{\mu\left[\ketbra 1 + \eta^2 \ketbra 0 - \eta (\ketbra{0}{1}  +\ketbra{1}{0})\right]}_{=:\op L}\otimes\ketbra T,
	\end{equation}
	where $\mu\ge 1-\epsilon$ and $\eta\le\sqrt{\epsilon/2}$.
	For e.g.\ $\dim \op S=8$ and reversing the second half of the time register (i.e.\ reversing the basis from $T$ to $2T$), this matrix looks like
	\begin{equation*}
		\op S_8=
		\begin{pmatrix}
			2 & -1 & 0 & 0 & 0 & 0 & 0 & 0 \\ -1 & 2  & -1 &      0       &    0     & 0  & 0  & 0  \\ 0  & -1 & 2  &      -1      &    0     & 0  & 0  & 0  \\ 0  & 0  & -1 & \mu\eta ^2+1 & -\mu\eta & 0  & 0  & 0  \\ 0  & 0  & 0  &   -\mu\eta   &  1+\mu   & -1 & 0  & 0  \\ 0  & 0  & 0  &      0       &    -1    & 2  & -1 & 0  \\ 0  & 0  & 0  &      0       &    0     & -1 & 2  & -1 \\ 0  & 0  & 0  &      0       &    0     & 0  & -1 & 1
		\end{pmatrix}
		.
	\end{equation*}
	Using a variational argument with the state
	\begin{equation*}
		\ket\phi:=\sum_{i=1}^T\sin\left(\frac{\pi i}{2T}\right)(\ket i + \ket{i+T})
	\end{equation*}
	and \cref{lem:lb}, we can explicitly show that $\op S=\Omega(T^{-2})$, independent of $\mu$ and $\eta$.
	The bound can be evaluated using five cases, which we summarize below.
	\begin{align*}
		\intertext{$t=1$:}
		\left(2 \sin \left(\frac{\pi }{2 T}\right)-\sin \left(\frac{\pi }{T}\right)\right) \csc \left(\frac{\pi }{2 T}\right) & =\frac{\pi ^2}{4 T^2}+O(T^{-3})                                                       \\ \intertext{$1< t\le T-1$ and $T+1<t<T$:}
		4 \sin ^2\left(\frac{\pi }{4 T}\right)                                                                                & =\frac{\pi ^2}{4 T^2}+O(T^{-3})                                                       \\ \intertext{$t=T$:}
		\eta ^2 \mu -\eta  \mu  \sin \left(\frac{\pi }{2 T}\right)-\cos \left(\frac{\pi }{2 T}\right)+1                       & =\eta ^2 \mu -\frac{\pi  \eta  \mu }{2 T}+\frac{\pi ^2}{8 T^2}+O(T^{-3})              \\ \intertext{$t=T+1$:}
		\mu -\eta  \mu  \csc \left(\frac{\pi }{2 T}\right)-2 \cos \left(\frac{\pi }{2 T}\right)+1                             & =-\frac{2 T (\eta  \mu )}{\pi }+(\mu -1)-\frac{\pi  \eta  \mu }{12 T}+\frac{\pi ^2}{4
			T^2}+O(T^{-3})                                                                                                                                                                                                \\ \intertext{$t=2T$:}
		1-\sin \left(\frac{\pi  (T-1)}{2 T}\right)                                                                            & =\frac{\pi ^2}{8 T^2}+O(T^{-3})
	\end{align*}
The claim of \cref{theo:kitaev} follows.
\end{proof}
\end{proofsomething}

For convenience, we further collect the findings regarding the block matrix decomposition of the full circuit Hamiltonian given in the proof of \cref{theo:kitaev} in the following lemma.
\begin{lemma}
	Kitaev's construction can be block-decomposed as $\Hkit=\bigoplus \op S_i$, where each block $\op S_i$ has the form given in \cref{eq:stoq-double-delta}.
	The ground state energy of each of these blocks is $\Omega(T^{-2})$, and thus $E_p(\Hkit)=\Omega(T^{-2})$.
\end{lemma}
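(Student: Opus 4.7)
The plan is to recognize that this lemma is essentially a structural repackaging of what was already established in the proof of \cref{theo:kitaev}, and so I would verify its two claims separately using the machinery developed there.

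For the block decomposition, I would proceed in two stages. First, I would apply the global unitary $\op W = \sum_t \proj t \otimes \op U_t\cdots \op U_1$ from \cref{lem:clockham} to bring $\Hkit$ into the form of \cref{eq:hamKitaev}. Second, since the encoded circuit is a \no instance, \cref{lem:fullrank} applies, so I may invoke \cref{cor:tech} with $\Pi_0 = \Pin$ and $\Pi_1 = \op U^\dagger \Pout \op U$ to obtain a register-local unitary $\op V$ (built from Jordan's lemma) such that $\op V^\dagger \Pin \op V = \1_{d/2}\oplus 0$ and $\op V^\dagger \op U^\dagger \Pout \op U \op V$ has the pair-of-Jordan-blocks form shown in \cref{eq:matrix}. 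Since $\1\otimes \op V$ commutes with $\Delta \otimes \1$, the propagation part is preserved. After reordering the computational basis so that each $\Pin$-supported level is paired with the $\Pin$-kernel level to which it is Jordan-coupled, the full Hamiltonian decouples into $d/2$ independent blocks $\op S_i$ on $\field C^2\otimes\field C^{T+1}$, each of which has precisely the form of \cref{eq:stoq-double-delta} with $\mu = \mu_i \in [1-\epsilon,1]$ and $\eta = \eta_i \in [0,\sqrt{\epsilon/2}]$, as derived by the five enumerated steps in the proof of \cref{theo:kitaev}.

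For the energy bound per block, I would directly reuse the variational calculation from that same proof: each $\op S_i$ is stoquastic in the $\ket t$ basis, so \cref{lem:lb} applies with the positive trial vector $\ket\phi = \sum_{i=1}^T \sin(\pi i /(2T))(\ket i + \ket{i+T})$, and the five-case evaluation of $\bra t \op S_i \ket\phi / \braket{t}{\phi}$ yields $\lmin(\op S_i) = \Omega(T^{-2})$ uniformly in $\mu_i,\eta_i$. Taking the minimum across blocks gives the overall ground state energy since the ground energy of a direct sum equals the minimum of the summands' ground energies, and $E_p(\Hkit) = \min_i \lmin(\op S_i) = \Omega(T^{-2})$ follows. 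The only point that genuinely requires attention is confirming the uniformity of the variational bound in the boundary cases $t=T$ and $t=T+1$, where $\mu,\eta$-dependent corrections appear; the expansions already tabulated show these corrections are dominated by the leading $\pi^2/T^2$ contributions for any fixed $\epsilon < 1$, which suffices.
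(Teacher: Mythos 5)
Your proposal matches the paper's approach essentially verbatim: the paper introduces this lemma explicitly as a "collection of the findings" from the proof of \cref{theo:kitaev}, and your two-part verification (the $\op W$-then-$\op V$ conjugation via \cref{lem:fullrank} and \cref{cor:tech} to obtain the blocks of \cref{eq:stoq-double-delta}, followed by the five-case stoquastic variational bound via \cref{lem:lb}) is exactly what the paper does. Your handling of the block structure, the $d/2$ pairing, and the reduction $E_p(\Hkit) = \min_i \lmin(\op S_i)$ are all fine.

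The one place you are too quick is your closing sentence asserting that "the expansions already tabulated show these corrections are dominated by the leading $\pi^2/T^2$ contributions for any fixed $\epsilon < 1$." That is not what the tabulated $t = T+1$ case shows: the displayed expansion contains the term $-\tfrac{2T\eta\mu}{\pi}$, which is $\Theta(T)$ and negative whenever $\eta$ is bounded away from zero (i.e.\ whenever the acceptance probability $\epsilon$ is a fixed constant, which is precisely the regime the lemma covers). In that regime the ratio $\bra{T+1}\op S\ket\phi/\braket{T+1}{\phi}$ computed from that specific trial state is not $\Omega(T^{-2})$ but diverges to $-\infty$, and \cref{lem:lb} then gives a vacuous lower bound. (A closely related symptom: the trial state $\sum_{i=1}^T\sin(\pi i/(2T))(\ket i + \ket{i+T})$ has zero amplitude on two of the $2(T+1)$ basis vectors of the block, so \cref{lem:lb}'s positivity hypothesis is not literally satisfied either.) To make this step airtight one would either need to restrict to $\eta = O(T^{-1})$, or adjust the trial state (e.g.\ shift the argument of the sine and absorb a $\mu,\eta$-dependent tilt near the junction) so that the $t=T$ and $t=T+1$ ratios genuinely stay $\Omega(T^{-2})$ uniformly. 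To be fair, this is a defect inherited directly from the paper's own presentation rather than something you introduced, but since you explicitly singled out the $t=T+1$ case as "the only point that genuinely requires attention" and then dismissed it with an appeal to the tabulated expansion, you should have noticed that the expansion does not in fact deliver the claimed domination.
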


\section{Limitations on Further Improvement}\label{sec:limitations}
\subsection{Linear Clock Registers with Endpoint Penalties}\label{sec:limitation-linear}
The proof of \cref{theo:tridiag} is based on applying a sharp spectral gap bound for birth-and-death Markov chains to a quantum-to-classical mapping that has been studied previously in the closely related context of universal adiabatic computation \cite{aharonov2008adiabatic} and the complexity of stoquastic Hamiltonians~\cite{bravyi2006merlin,bravyi2009complexity}.
A new feature of our application is the realization that this quantum-to-classical mapping defines a Markov chain even for tridiagonal Hamiltonian matrices with arbitrary complex entries, while previous applications have been restricted to cases for which $\op H$ has all non-positive off-diagonal matrix entries in the time register basis.

\begin{proofsomething}{theo:tridiag}
In this section we continue with the notation of \cref{eq:ham}, but now we use the freedom to shift the overall energy to set $a_t \geq 0$ for all $t$, such that the ground state energy $E$ satisfies $0 \leq E < 1$; this allows us to show optimality for tri-diagonal clock constructions, re-stated in the following theorem.
\tridiag*
\begin{proof}
Define  $\op G := (\1-\op H)/(1-E)$ to be a shifted and rescaled version of $\op H$ which is designed to satisfy $\op G \ket{\psi} = \ket{\psi}$, where $\ket\psi$ labels the ground state of $\op H$.
For all $t,t'\in \{0,\ldots,T\}$, define
\[
	\op P_{t,t'} :=
	\begin{cases}
		\psi_{t'} \op G_{t,t'} \psi_{t}^{-1} & \text{if $\psi_t \neq 0$ and $\psi_{t'} \neq 0$}, \\
		0                                    & \text{otherwise}.
	\end{cases}
\]
In the following lemma we will show that the $\op P_{t,t'}$ are transition probabilities for an irreducible Markov chain on $\{0,\ldots,T\}$, i.e.\ in particular that they are all nonnegative.
First, observe that if $\op H$ is stoquastic as in previous applications, then $\op G$ is a nonnegative matrix in the time register basis, $\psi$ has nonnegative amplitudes in this basis by the Perron-Frobenius theorem, and so $\op P_{t,t'}$ is explicitly nonnegative.
Here we show that even when $\op G$ contains arbitrary complex matrix entries (while being tridiagonal) we still have $\op P_{t,t'}\geq 0$, because of cancellations that occur between the matrix elements of $\op G$ and the amplitudes of the ground state wave function in the time register basis.
Continuing with the same notation used in \cref{eq:clockham}, we state the following lemma.
\begin{lemma}
	\label{lem:unfrustrated} If $\psi_0 \neq 0$, $\psi_T \neq 0$, and $b_t \neq 0$ for $t = 0,\ldots,T$, then $\psi_t \neq 0$ for $t = 1,\ldots,T-1$ and $\op P_{t,t+1}  =\psi_{t+1} \op G_{t,t+1} \psi_t^{-1} \geq 0$ for all $t \in \{0,\ldots,T-1\}$.
\end{lemma}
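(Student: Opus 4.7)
The plan is to reduce the complex tridiagonal setting to the familiar real-symmetric stoquastic one via a diagonal unitary gauge transformation, and then apply the Perron--Frobenius theorem. The hypothesis $b_t \neq 0$ for every $t$ is exactly what makes this gauge well defined, which is why it appears in the statement. The hypotheses $\psi_0 \neq 0$ and $\psi_T \neq 0$ pin down the overall phase so that the Perron--Frobenius output can be transferred back to $\ket\psi$.

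Concretely, I would recursively define a diagonal unitary $\op D = \operatorname{diag}(\alpha_0, \ldots, \alpha_T)$ by $\alpha_0 := 1$ and $\alpha_{t+1} := -\alpha_t b_t / |b_t|$ for $t = 0,\ldots,T-1$. A direct calculation then shows that $\tilde{\op H} := \op D^\dagger \op H \op D$ is tridiagonal with the same diagonal entries $a_t$ but with real negative off-diagonals $-|b_t|$, so it is stoquastic and irreducible in the clock basis. Unitarity of $\op D$ gives $\spec(\tilde{\op H}) = \spec(\op H)$, so the ground state energy is still $E$, and the corresponding ground state is $\ket{\tilde\psi} = \op D^\dagger \ket\psi$ with components $\tilde\psi_t = \alpha_t^* \psi_t$, hence $|\tilde\psi_t| = |\psi_t|$.

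Next I would look at the rescaled nonnegative matrix $\tilde{\op G} := (\1 - \tilde{\op H})/(1-E)$, whose off-diagonal entries $|b_t|/(1-E)$ are strictly positive, making $\tilde{\op G}$ an irreducible nonnegative tridiagonal matrix. Its leading eigenvalue is $1$ with eigenvector $\ket{\tilde\psi}$, and the Perron--Frobenius theorem then yields $\tilde\psi_t > 0$ for every $t \in \{0,\ldots,T\}$ after fixing the global phase using $\tilde\psi_0 = \alpha_0^* \psi_0 \neq 0$. Since $|\psi_t| = \tilde\psi_t$, this already establishes the first claim that $\psi_t \neq 0$ on the interior.

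For the positivity of $\op P_{t,t+1}$, I would combine $\op G_{t,t+1} = -b_t^*/(1-E)$ with the identity $b_t^* \alpha_{t+1}/\alpha_t = -|b_t|$ that is built into the gauge choice, which gives
\[
\op P_{t,t+1} \;=\; \frac{\alpha_{t+1}\,\tilde\psi_{t+1}}{\alpha_t\,\tilde\psi_t}\cdot\frac{-b_t^*}{1-E} \;=\; \frac{|b_t|}{1-E}\cdot\frac{\tilde\psi_{t+1}}{\tilde\psi_t} \;\geq\; 0.
\]
The only conceptually nontrivial step is the existence of the gauging unitary $\op D$, and I expect this to be the main (though still modest) obstacle: it works here precisely because the clock graph is a path and therefore simply connected, so there is no holonomy obstruction to simultaneously absorbing the phases of all the $b_t$. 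On a clock with loops (as in a general unitary labelled graph) this would become a genuine cocycle condition on the product of edge weights around each cycle, which is exactly the ``simple ULG'' hypothesis invoked later in the paper.
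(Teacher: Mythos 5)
Your proof is correct, but it takes a genuinely different route from the paper. You gauge away the phases of the hopping terms with a diagonal unitary $\op D$ (well defined because the clock graph is a path, exactly as you note), reduce to a real stoquastic tridiagonal matrix $\tilde{\op H}$, and then invoke Perron--Frobenius on the rescaled matrix to get strict positivity of the ground-state amplitudes, from which $\op P_{t,t+1}=\frac{|b_t|}{1-E}\,\tilde\psi_{t+1}/\tilde\psi_t>0$ follows by direct computation. The paper instead works directly with the complex eigenvalue equations $b^*_{t-1}\psi_{t-1}+a_t\psi_t+b_t\psi_{t+1}=E\psi_t$ and runs an induction over $t$ showing that the ground energy is minimized precisely when each ratio $\psi_{t+1}b_t\psi_t^{-1}$ is negative; it never symmetrizes the Hamiltonian. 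Your argument buys more: it shows the ground state is unique and nowhere vanishing, so the hypotheses $\psi_0\neq 0$, $\psi_T\neq 0$ are actually automatic once all $b_t\neq 0$ (they are not needed to ``fix the phase'', since $\op P_{t,t+1}$ is invariant under a global phase), and it makes transparent why the obstruction in the general ULG setting is a holonomy condition. The paper's argument, by contrast, stays entirely in the original (ungauged) basis, which is the form in which the Markov-chain facts listed afterwards are stated. One small repair to your write-up: after the paper's energy shift that enforces $a_t\ge 0$ and $0\le E<1$, the bound $a_t\le 1$ need not survive, so the diagonal entries $(1-a_t)/(1-E)$ of $\tilde{\op G}$ can be negative and $\tilde{\op G}$ is then only ``essentially nonnegative''; apply Perron--Frobenius to $\tilde{\op G}+c\1$ for a large constant $c$ (same eigenvectors, top eigenvalue still the ground state of $\tilde{\op H}$), which restores the nonnegativity and irreducibility you need.
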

Before proving the lemma, note that the conditions may be taken to hold without loss of generality, since $\psi_0 = 0$ or $\psi_T = 0$ immediately implies \cref{theo:tridiag}l. Similarly, if $b_{t'} = 0$ for some $t'$, then  $\ket{\psi^\perp} := \sum_{t=0}^T \psi^\perp_t \ket{t}$ defined by
\begin{equation}
	\label{eq:perp} \psi^\perp_t :=
	\begin{dcases}
		\phantom{+}\frac{\psi_t}{\psi^2([0,t'])}       & 0 \leq t \leq t'  \\
		 -\frac{\psi_t}{\psi^2([t'+1,T])}  & t' < t \leq T - 1
	\end{dcases}
\end{equation}
satisfies $\braket{ \psi^\perp}{\psi} = 0$ and $\op H \ket{\psi^\perp} = E\ket{\psi^\perp}$, which implies $\Delta_{\op H} = 0$---so again  \cref{theo:tridiag} holds in this case---note that the same idea behind \cref{eq:perp} can be used to upper bound the spectral gap by the minimum of $\pi_t \pi_{t+1}$ over all $t\in \{0,\ldots,T-1\}$, such that $\psi^2([0,t'])$ and $\psi^2([t'+1,T])$ are both $\Omega(1)$.

Now turning to the proof of \cref{lem:unfrustrated}.
From $\op H \ket{\psi} = E \ket{\psi}$ we have that
\begin{align}
	 & a_{0}\psi_0 + b_{0}\psi_1 = E \psi_0, \label{eq:left-amplitude}                                                                       \\
	 & b^*_{i-1}\psi_{i-1} + a_i \psi_i +  b_i\psi_{i+1} = E \psi_i \quad \text{for } i = 1,\ldots,T-1\text{, and} \label{eq:mid-amplitudes} \\
	 & a_{T} \psi_{T} + b^*_{T-1} \psi_{T-1} = E \psi_{T}. \label{eq:right-amplitudes}
\end{align}
Since $\op P_{t,t'} = 0$ when $|t - t'| > 1$, our goal is to show $\op P_{t,t+1} > 0$ for $t\in \{0,\ldots,T-1\}$, and $\op P_{t,t-1}>0$ for $t\in\{ 1,\ldots,T\}$.
The first claim $\op P_{t,t+1} > 0$ will follow by showing that $E$ is minimized when $\psi_t \neq 0$ and $\psi_{t+1} b_t \psi^{-1}_t < 0$ for all $t= 0,\ldots,T-1$.
The second claim $\op P_{t-1,t} > 0$ is then implied immediately, since
\[
	\psi_t b^*_{t} \psi_{t+1}^{-1} = \left(\frac{|\psi_t|}{|\psi_{t+1}|} \right)^2 \frac{\psi^*_{t+1}}{\psi^*_t} b^*_t =  \left(\frac{|\psi_t|}{|\psi_{t+1}|} \right)^2 \left(\psi_{t+1}b_t\psi_t^{-1}  \right)^*.
\]

Rearranging \cref{eq:left-amplitude} yields $\psi_1 b_{0}\psi^{-1}_0 = E - a_0$, and since $E - a_{0}$ is real the value of $E$ implied by this equation alone is minimized when the LHS is negative.
This observation will be taken as the base case for an inductive argument over the finite set $\{1,\ldots,T-1\}$.

The inductive hypothesis is that the value of $E$ implied by considering only equations $0$ through $t$ in the list \cref{eq:mid-amplitudes} is minimized when $\psi_{t}b_{t-1}\psi^{-1}_{t-1}$ is negative for $1,\ldots,t$, and this will be used to show that the minimum value of $E$ that satisfies equations $0$ through $t+1$ in \cref{eq:mid-amplitudes} will be achieved when $\psi_{t+1}b_{t}\psi^{-1}_{t}$ is negative as well.
Using the fact that $\psi_{t} \neq 0$ from the inductive hypothesis, we may express \cref{eq:mid-amplitudes} as
\[
	b^*_{t-1}\frac{\psi_{t-1}}{\psi_t} +  b_{t} \frac{\psi_{t+1}}{\psi_i} = E - a_{t}  \quad \text{for } t = 1,\ldots,t.
\]
Since $E - a_t$ is real and $\psi_{t-1} b^*_{t-1} \psi^{-1}_t = (|\psi_{t-1}|/|\psi_t|)^2 (\psi_{t}b_{t-1}\psi^{-1}_{t-1})^*$ is negative by the inductive hypothesis, the value of $E$ implied by equations 0 through $t+1$ in the list \cref{eq:mid-amplitudes} will indeed be minimized by taking $\psi_{t+1} b_{t}\psi^{-1}_t$ to be negative.
This establishes the inductive claim;
a simple argument shows the same for \cref{eq:right-amplitudes}, which completes the proof of \cref{lem:unfrustrated}.

Having established that $\op P_{t,t'} \geq 0$ we now list several standard facts which have been previously applied to $\op P$ when $\op G$ is nonnegative, which can also be seen in the present case by direct computation:
\begin{enumerate}
	\item $\op P$ is a stochastic matrix, i.e.\ $\sum_{t' = 0}^T \op P_{t,t'} = 1$ for all $t \in \{0,\ldots,T\}$, and therefore it can be regarded as the transition matrix of a discrete time Markov chain.
	\item The largest eigenvalue of $\op P$ is equal to 1 and it corresponds to the unique principal eigenvector $\ket{\pi} = \sum_{t=0}^T |\psi_t|^2 \ket{t}$.
	      The probability distribution $\pi_t := \braket{ t }{\pi}$ is the stationary distribution of the corresponding Markov chain.
	\item The Markov chain defined by $\op P$ is reversible with respect to its stationary distribution,
	      \begin{equation*}
		      \pi_t \op P_{t,t'} = \braket{t'}{\psi} \braket{\psi}{t} \op G_{t,t'} = \left (\braket{t}{\psi} \braket{\psi}{t'} \op G^*_{t,t'}\right )^* = \pi_t' \op P_{t',t}.
	      \end{equation*}
	\item If $\ket{\psi_0} , \ket{\psi_1} , \ldots , \ket{\psi_T}$ are the eigenvectors of $\op H$ with eigenvalues $E_0 < E_1 \leq \ldots \leq E_T$, then $\ket{\phi_k} := \sum_{x\in \Omega} \braket{\psi_0}{x} \braket{x}{\psi_k} \ket x$ is an eigenvector of $\op P$ with eigenvalue $( 1 - E_k)/(1- E_0)$.
	      Since this is the complete list of eigenvectors of $\op P$, we have shown that the spectral gaps of $\op H$ and $\op P$ satisfy
	      \begin{equation}
		      \Delta_{\op P} = (1-E) \Delta_{\op H}.
		      \label{eq:gap-relate}
	      \end{equation}
\end{enumerate}

The relation \cref{eq:gap-relate} means that we can apply techniques developed for upper bounding the spectral gap of Markov chains to the problem of upper bounding the spectral gap of $\op H$.
A non-trivial example of such an upper bound is \cref{eq:cheeger}: if the overlap of the stationary distribution with the end points $\ket{0}$ and $\ket{T}$ is constant, we can immediately conclude that the conductance $\Phi$ is $\BigO(T^{-1})$, by the fact that the stationary distribution is normalized; this implies $\Delta_{\op H}=\BigO(T^{-1})$.

It turns out we can obtain an even tighter bound by using a characterization of spectral gaps that applies specifically to birth-and-death chains~\cite{Chen2013}, which we state here as a lemma.
\begin{lemma}
	If $\op P$ is a birth and death chain with stationary distribution $\pi$, then the spectral gap $\Delta_{\op P}$ satisfies
	\begin{equation}
		\frac{1}{2 \ell} \leq \Delta_{\op P} \leq \frac{4}{\ell} \label{eq:ell-gap}
	\end{equation}
	where
	\begin{equation}
		\ell := \max \left \{ \max_{j : j \leq i'} \sum_{k = j}^{i' -1} \frac{\pi([0,j]}{\pi(k)\op P_{k,k+1}} , \max_{j: j > i'} \sum_{k=i'+1}^j \frac{\pi([j,T])}{\pi(k)\op P_{k,k-1}} \right\}\label{eq:bd-lemma}
	\end{equation}
	where $i'$ satisfies $\pi([0,i'])\geq 1/2$ and $\pi([i',n]) \geq 1/2$.
\end{lemma}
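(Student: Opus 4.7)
The plan is to recognize that this lemma is a standard result in the Markov chain literature (essentially due to Miclo, sharpened by Chen), and the proof proceeds via the variational characterization of the spectral gap combined with the discrete Hardy inequality for birth-and-death chains. First I would recall that for a reversible Markov chain, the spectral gap admits the Poincaré characterization
\[
\Delta_{\op P} \;=\; \min_{f:\,\mathrm{Var}_\pi(f) \neq 0} \frac{\mathcal E(f,f)}{\mathrm{Var}_\pi(f)},
\qquad
\mathcal E(f,f) = \tfrac{1}{2}\sum_{t,t'} \pi_t \op P_{t,t'}\bigl(f(t)-f(t')\bigr)^2,
\]
so that proving $\Delta_{\op P} \geq 1/(2\ell)$ amounts to showing $\mathrm{Var}_\pi(f) \leq 2\ell \,\mathcal E(f,f)$ for all $f$, and proving $\Delta_{\op P} \leq 4/\ell$ amounts to exhibiting a test function $f_*$ with ratio at most $4/\ell$.

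For the lower bound, the key step is to split the state space at the median index $i'$ (with $\pi([0,i']), \pi([i',T]) \geq 1/2$). Replacing $f$ by $f - f(i')$ does not change either side, so one may assume $f(i')=0$, and then
\[
\mathrm{Var}_\pi(f) \;\leq\; \sum_{t \leq i'} \pi_t f(t)^2 + \sum_{t > i'} \pi_t f(t)^2.
\]
On each side, one expresses $f(t) - f(i') = \sum_k (f(k{+}1) - f(k))$ restricted to the interval between $t$ and $i'$, and invokes a Cauchy--Schwarz style step, weighting the increments by $\pi(k)\op P_{k,k+1}$ so that the Dirichlet form appears in one factor and exactly the reciprocal sums from \cref{eq:bd-lemma} appear in the other. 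Summing over $t$ and reorganizing yields a constant given precisely by the two maxima defining $\ell$; the factor $1/2$ drops out of the splitting at the median, producing $\mathrm{Var}_\pi(f) \leq 2\ell\,\mathcal E(f,f)$.

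For the upper bound, the corresponding test function is constructed directly from the weights in \cref{eq:bd-lemma}: for the inner maximizer $j \leq i'$, take
\[
f_*(t) \;=\; \sum_{k=t}^{j-1} \frac{1}{\pi(k)\op P_{k,k+1}} \quad\text{for } t \leq j,
\qquad f_*(t) = 0 \quad\text{for } t > i',
\]
interpolating suitably between. A direct computation shows $\mathcal E(f_*,f_*)$ telescopes to the same sum that appears in the definition of $\ell$, while $\mathrm{Var}_\pi(f_*)$ is bounded below by $\pi([0,j])/2$ times the square of that sum (using that $\pi([i',T]) \geq 1/2$), producing a ratio of order $4/\ell$. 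I expect the main obstacle to be pinning down the sharp constants $1/2$ and $4$ rather than merely $\Theta(1/\ell)$: this requires being careful about the median-splitting step in the lower bound and the choice of cutoff in the test function for the upper bound. Since the lemma is quoted from \cite{Chen2013}, however, the cleanest route is simply to appeal to that reference once the setup above has been matched to their notation.
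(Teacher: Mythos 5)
Your fallback is exactly what the paper does: this lemma is imported verbatim from the cited reference \cite{Chen2013}, and the paper offers no proof of its own, so there is no internal argument to compare your sketch against. Your outline---the Poincar\'e variational characterization of $\Delta_{\op P}$, splitting at the median $i'$, a Cauchy--Schwarz/discrete Hardy estimate on telescoping increments weighted by $\pi(k)\op P_{k,k+1}$ for the lower bound, and a test function built from the reciprocal weights for the upper bound---is the standard route underlying that reference, so it is sound in outline; only the sharp constants $1/2$ and $4$ and the precise cutoff of your test function (the ``interpolating suitably'' step) would need to be pinned down if a self-contained proof were required. Since the paper treats this as a quoted result, simply matching notation and appealing to \cite{Chen2013}, as you propose at the end, is entirely adequate.
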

In the present case we are seeking a lower bound on $\ell$ in order to have an upper bound on the gap.
To simplify the formulas we assume that the stationary distribution of the weighted history state is symmetric around $t = T/2$ (otherwise the problem divides into two similar cases).
Since we are seeking a lower bound on $\ell$, we can ignore the factor of $\op P_{k,k+1} \leq 1$ in the denominator, and we are also free to replace the maximization over $j$ with any fixed choice of $j$.
With these simplifications and the choice of $j = 1$, \cref{eq:bd-lemma} becomes

\begin{equation*}
	\ell \geq  \psi^2_0\sum_{t = 1}^{T/2 - 1} \frac{1}{\psi^2_t }.
\end{equation*}
Applying the inequality of the arithmetic and geometric means yields
\[
	\sum_{t = 1}^{T/2 - 1} \frac{1}{\psi^2_t }  \geq \left( \frac{T}{2} - 1 \right) \left(\psi_1^2 \cdots \psi^2_{T/2-1}  \right)^{-1/k} \geq \left( \frac{T}{2} - 1 \right)^2 \left(\sum_{t=1}^{T/2-1} \psi_t  \right)^{-1},
\]
and so $\ell=\Omega(\psi^2_0 T^2)$.
Together with \cref{eq:ell-gap} we have that the spectral gap $\Delta_{\op P}$ is $\BigO(\ell^{-1})$, and it follows that $\Delta_{\op H} \cdot \psi^2_0=\BigO(T^{-2})$, as claimed.
\end{proof}
\end{proofsomething}

Finally, we note that \cref{theo:tridiag} can be interpreted as proving that the standard universal adiabatic construction plus the weighted endpoint modification made above is in a sense optimal for Hamiltonians of the form \cref{eq:ham}.
First, the problem of upper bounding the spectral gap of universal adiabatic constructions was addressed before~\cite{ganti2013gap} by combining the quantum lower bound for unstructured search with the technique of spectral gap amplification.
This previous work found a general $\tilde{\BigO}(T^{-1})$ bound (where the tilde hides logarithmic factors) on the spectral gap of \emph{any} adiabatic Hamiltonian, an $\tilde{\BigO}(T^{-2})$ gap for any frustration-free adiabatic Hamiltonian, and finally an $\tilde{\BigO}(T^{-2})$ bound on the spectral gap of modified Feynman Hamiltonians of the form \cref{eq:ham} when the weights near the endpoints satisfy a reasonable assumption for any adiabatic computation.
Our \cref{theo:tridiag} corroborates this last result by showing a tight $\BigO(T^{-2})$ upper bound on the spectral gap and the minimum overlap of the weighted history state with either endpoint of the computation.

\subsection{Diameter Bounds for simple generalized ULGs}\label{sec:sc-ulg-diameter-bound}
Following \cref{theo:tridiag}, an immediate question is whether we can overcome the upper bound on the product of penalty overlap and spectral gap, e.g.\ by deviating from the tridiagonal ``path'' evolution.
For instance, one might hope that increasing the connectivity of the ULG underlying the circuit Hamiltonian would yield a larger gap, or would allow us to circumvent \cref{theo:tridiag} altogether.
In this section, we will argue for a lower bound on the diameter of any such graph designed to increase the connectivity.

Assume we have a circuit Hamiltonian consisting of a family of simple ULGs \cite{Bausch2016} (as described in \cref{sec:limitationsIntro}) and a family of circuits $(C_i)_{i\in\field N}$, e.g.~a uniform verifier circuit family.
Further suppose that each of these circuits is ``optimally compiled'', in the sense that the product of local unitaries along any length $k$ path through the ULG cannot be compiled into fewer than $k$ local gates.  

This family of circuits can be implemented by a family of standard circuit Hamiltonians $\{\op H_i\}_{i \in \mathbb{N}}$, each of which has a linear clock ranging from $t=1,\ldots,T_i$, where $T_i:=|C_i|$ is the number of gates in the circuit $C_i$.
This corresponds to a path-like clock:
\begin{equation*}
	\includegraphics{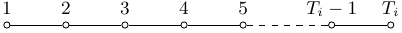}
\end{equation*}
Now assume there exists another Hamiltonian $\op H'$ with at least one path between $t=1$ and $t=T_i$, but of shorter length $T'<T_i$, e.g.
\begin{equation*}
	\includegraphics{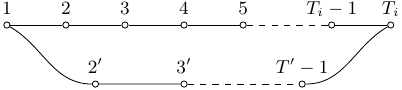}
\end{equation*}
\newcommand{\Hrest}{\op H_\textnormal{rest}}
Since $(1,\ldots,T_i,T'-1,\ldots,2',1)$ forms a loop, the assumption that the ULG is simple implies $(\op U_{1,2}\ldots \op U_{T_i -1,T_i})(\op U_{T' -1, T'-2}\ldots U_{2',1}) = \1$.  But demanding this condition for all the circuits in the family is inconsistent with the hypothesis that the circuit was compiled optimally in first place.

This demonstrates that if we want to increase the UNSAT penalty with more general clock graphs, we better use frustration, i.e.\ go beyond simple ULGs;
we discuss this case in \cref{sec:non-simple-ulg}.

\subsection{A Clock Diameter Bound}\label{sec:diameter-bound}
In \cref{sec:sc-ulg-diameter-bound}, we saw how in a generalized ULG, we need paths of a minimum length to encode computation; anything shorter will necessarily conflict with the computational output and introduce an additional error in a \yes case;
this, however, is a problem, since e.g.\ for \QMA-hardness proofs it is crucial to be able to amplify the \yes error to $\epsilon=\smallO(\text{promise gap})$.

But if we know that there have to be straight sections of length $T$ within any graph, how does this graph \emph{diameter} correspond to the spectral gap of the clock Hamiltonian?
And can we combine this with the case of non-simple ULGs (\cref{sec:non-simple-ulg})?
By \cref{lem:geo}, we know that $E_p(\epsilon,T)\le E(\Hclock(T)+\op P)$, even for more general positive semi-definite penalties on an arbitrary number of vertices, all of which we collect in the term $\op P$.
In particular, this tells us that we can obtain an upper bound on the UNSAT penalty solely by looking at the clock part of the Hamiltonian.

\begin{proofsomething}{th:diam}
The most general such clock is an arbitrary Hermitian matrix, and we develop a diameter bound on it following \cite{KelnerMITLecturenotes2009}.
This proves the last of our theorems, i.e.
\thdiam*
\begin{proof}
Let $\op H$ be Hermitian on $\field C^T$, and---as in \cref{sec:limitation-linear}---define
\begin{equation}\label{eq:G}
\op G:=(\op H-\lmin)/(\| \op H \| - \lmin).
\end{equation}
Then $\spec(\op G)\subset[0,1]$.
For two states $\ket u,\ket v\in\field C^T$, we define their \emph{distance} under $\op G$ as
\[
	\dist_{\op G}(\ket u,\ket v):=\min\{ k : \bra u \op G^k \ket v \neq 0\}
\]
if it exists, and we set $\dist_{\op G}(\ket u,\ket v)=\infty$ otherwise; this is the case e.g.\ if $\op G$ is block-diagonal, and $\ket u$ and $\ket v$ are vectors with support completely contained in disjoint blocks.

The reason for calling it a \emph{distance} is obvious from the equality
\begin{equation*}
	\dist_{\op G}(\ket u,\ket v)=\min\left\{d: \bra u\prod_{i=1}^d\left(\sum_{jk}\proj j\op G\proj k\right)\ket v\neq 0\right\},
\end{equation*}
i.e.\ it catches the minimum number of nonzero jumps under $\op G$ necessary to reach some overlap between $\ket u$ and $\ket v$.
The \emph{diameter} of $\op G$ is then defined as
\[
	\diam\op G:=\max_u\min_v\{\dist_{\op G}(\ket u,\ket v)\}.
\]
It is clear that $\diam\op G=\diam\op H$.

If we use the spectral decomposition $\op G=\sum_i\lambda_i\proj{\psi_i}$ and some polynomial $p$, then for any states $\ket u$ and $\ket v$, we have \newcommand{\pmin}{\pi_{\mathrm{min}}}
\begin{align*}
	|\bra u p(\op G)\ket v| & =\left|\sum_i p(\lambda_i)\braket{u}{\psi_i}\braket{\psi_i}{v}\right| \\ &\ge p(1)|\pi_{u,v}| + \left|\sum_{i\ge 2}p(\lambda_i) u_i \bar v_i\right|\\ &\ge p(1)|\pi_{u,v}| - \max_{i\ge 2}p(\lambda_i)\sum_{i\ge 2}|u_i||v_i|\\ &\ge p(1)|\pi_{u,v}| - \max_{i\ge 2}p(\lambda_i),
\end{align*}
where $\pi_{u,v}:=u_1\bar v_1$, and $u_i:=\braket{u}{\psi_i}$, $\bar v_i:=\braket{\psi_i}{v}$.

One can show that there exists a family of polynomials $p_k$ with $\deg p_k=k$, $p_k(1)=1$ and $|p_k(x)|\le 2(1+\sqrt{2\Delta})^{-k}\ \forall x\in[\Delta,1]$.
Assume for now that $\pi_{u,v}>0$.
We want to find a condition under which $p(\lambda_i)<\pi_{u,v}$, so we resolve
\begin{equation*}
	2(1+\sqrt{2\Delta})^{-k}<\pi_{u,v} \quad\Leftrightarrow\quad -k\ln(1+\sqrt{2\Delta})<\ln\left(\frac{\pi_{u,v}}{2}\right) \quad\Leftrightarrow\quad k>\frac{\ln(2/\pi_{u,v})}{\ln(1+\sqrt{2\Delta}}.
\end{equation*}
For $x\in(0,1)$ we further have a uniform bound\footnote{This bound is tight up to a factor of $2\ln2$, which is reached at $x=1$.}\ \ of $\ln(1+\sqrt{2x})\le(1+1/\sqrt{x})$, which yields a diameter bound of
\[
	\diam\op G\le\left(1+\frac{1}{\sqrt{2\Delta}}\right)\ln(2/\pi_{u,v}).
\]
This inequality still depends on our choice of states $\ket u$ and $\ket v$.
Assuming that $\pmin=\min_i|\braket{i}{\psi_1}|^2$ denotes the minimum ground state overlap with any basis state $\ket i$, we know that $|\pi_{u,v}|^2\ge\pmin$.
If we further assume $\pmin>0$, then the diameter bound reads
\[
	\diam\op G\le\left(1+\frac{1}{\sqrt{2\Delta}}\right)\ln(2/\pmin).
\]

What about the case where $\pmin=0$?
This would mean that there is at least one state $\ket x$ for which $\braket{x}{\psi_1}=0$.
Since we are interested in encoding computation in the ground state of a Hamiltonian, and $\op G$ serves as clock, we can dismiss those parts of the Hilbert space in which the ground states shows zero connectivity; this is the case if $\pmin>0$, and otherwise we restrict $\ket u, \ket v\in\supp \proj{\psi_1}$.
For the degenerate case we block-diagonalize $\op G$ and regard each such block separately.

To formalize this, we define a variant of the diameter which captures the ground space connectivity:
\begin{equation*}
	\diam' \op G:=\max_{u\in\supp\proj{\psi_1}}\min_{v\in\supp\proj{\psi_1}}\{\dist_{\op G}(\ket u,\ket v)\}.
\end{equation*}
The diameter bound then reads
\begin{equation}\label{eq:diameter-bound}
	\diam'\op G\le\left(1+\frac{1}{\sqrt{2\Delta}}\right)\ln(2/\pmin'),
\end{equation}
where $\pmin'=\min_{i\in\supp\proj{\psi_1}}|\braket{i}{\psi_1}|^2$.
Using the fact from \cref{sec:sc-ulg-diameter-bound} that the diameter for a given circuit of $T$ gates is at least $T$, and applying the definition of $\op G$ in terms of the Hamiltonian $\op H$, \cref{eq:G}, we obtain \cref{th:diam}.
\end{proof}
\end{proofsomething}

\subsection{Frustrated ULGs}\label{sec:non-simple-ulg}
In \cref{sec:sc-ulg-diameter-bound} we have seen that if we hope to increase the UNSAT penalty by going beyond a linear clock evolution, we will also have to go beyond simple ULGs.
The case left to be analyzed is therefore when we allow the quantum register to be frustrated, in the sense that there are two clock labels $a$ and $b$ with two distinct paths connecting them, and such that the product of unitaries along both paths is not identical.

As a simple example, consider the following ULG.
\begin{equation*}
	G=\raisebox{-.95cm}{\includegraphics[]{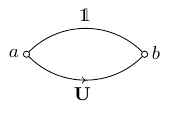}}
\end{equation*}
In case that $\op U=\sigma_x$, we can immediately write down the ULG Hamiltonian as
\begin{equation*}
	\op H_G=
	\begin{pmatrix}
		2  & 0  & -1 & -1 \\
		0  & 2  & -1 & -1 \\
		-1 & -1 & 2  & 0  \\
		-1 & -1 & 0  & 2
	\end{pmatrix}
	.
\end{equation*}
As we can see, this Hamiltonian by itself is already a graph Laplacian for the following graph:
\begin{equation*}
	\includegraphics[]{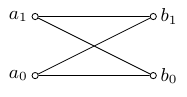}
\end{equation*}
However, our goal is to reduce the number of edges; if we instead work in the $\ket\pm$-eigenbasis of the off-diagonal blocks of $\op H_G$, $\1+\sigma_x$, then
\begin{equation*}
	\op H_G\sim
	\begin{pmatrix}
		2 & 0 & -2 & 0 \\
		0 & 2 & 0  & 0 \\
		2 & 0 & 2  & 0 \\
		0 & 0 & 0  & 2
	\end{pmatrix}
	,
\end{equation*}
which in turn corresponds to the graph
\begin{equation*}
	\includegraphics[]{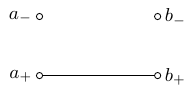}
\end{equation*}
but such that the vertices $a_-$ and $b_-$ carry an \emph{extra penalty} of $2$.

This argument also works for more general $\op U$: labelling the eigenvalues of $\1+\op U$ with $\lambda_1,\ldots,\lambda_n$, we have
\begin{equation}\label{eq:non-simple-ulg-2}
	\op H_G\sim
	\begin{pmatrix}
		2\1 & \diag(\lambda_1,\ldots,\lambda_n) \\ \diag(\bar{\lambda_1},\ldots,\bar{\lambda_n}) & 2\1
	\end{pmatrix}
	.
\end{equation}
The eigenvalues thus determine the strength of the edge connection; for $\lambda_i=0$ (as in the case of $\op U=\sigma_x$), the graph becomes completely disconnected; the adjoining vertices of the edge hence carry an extra penalty of
\begin{equation*}
	\op P=\sum_{i=1}^n (2-|\lambda_i|)\proj{\lambda_i},
\end{equation*}
where as in \cref{eq:non-simple-ulg-2} the $|\lambda_i|\le \|\op U+\1\|\le 2$.
Similar arguments can be derived for more complicated graphs;
as disconnected graph components introduce degeneracies, each component would need its own set of in- and output constraints in order to allow a distinction between \yes and \no case.

However, if that \emph{is} the case, then one can analyse the UNSAT penalty of each component separately.
As can be seen in \cref{def:unsat}, the \no-\yes energy difference stems from $\op H_\mathrm{FK}$---the Hamiltonian that includes in- and output penalties; $E(\Hprop)$ is static.
The UNSAT penalty would thus arise from the best such disconnected block, which we could have written down directly in first place.
It is thus highly doubtful that an explicitly-constructed non-simple ULGs will feature an amplified UNSAT penalty over the simple case.

\section{Outlook}\label{sec:outlook}
One of the main aims of the present work is to motivate new ideas in quantum ground state computation by focusing on the quantum UNSAT penalty as a metric for the improvement of circuit Hamiltonians.
We discuss a range of open problems related to the UNSAT penalty, some of which either appear more tractable, or lead to a different perspective on some of the open challenges facing this field; in particular, we are implicitly discussing \emph{relative} energy penalties that are not simply made larger by e.g.\ increasing the overall norm of the Hamiltonian.

\paragraph{The Classical Baseline.}
The classical Cook-Levin theorem encodes the history of a classical circuit into the satisfying assignment of a 3-SAT formula.
If the computation has $T$ time steps, then the associated constraint satisfaction problem has $\BigO(T)$ local terms.
If each f those has a constant norm, then the classical UNSAT penalty is $\Omega(1)$.
Therefore we ask: is it possible for a circuit Hamiltonian containing $\BigO(T)$ local terms of bounded norm---which may be of a form more general than \cref{eq:ham}---to achieve an UNSAT penalty that is independent of the length of the computation?

\paragraph{Macroscopic UNSAT penalty.}
Building on the previous question which asks whether the UNSAT penalty can be made independent of the length of the computation, we further ask whether the UNSAT penalty can be made to scale macroscopically with the number of qubits $n$ in the underlying many-body quantum system.
Specifically, is there a circuit Hamiltonian with $\BigO(\text{poly}(n) T)$ local terms that achieves a $\text{poly}(n)$ UNSAT penalty that is independent of $T$?
Such a construction could be a useful step towards fault-tolerant adiabatic computation.

An intuition for this connection can be gained by considering a construction for energetically encoded fault-tolerant classical computation, whereby each logical bit could be encoded as an arrangement of spins in a self-correcting model (e.g.\ the 2D Ising model).
The UNSAT penalty could then have a macroscopic scaling---i.e.\ with the number of physical spins representing each logical bit---that is independent of $T$.

\paragraph{Constant relative UNSAT penalty.}
A circuit Hamiltonian with $O(m)$ local terms of bounded norm, where $m = \poly(T)$, with constant relative UNSAT penalty $E_p / m$ would yield a proof of the quantum PCP conjecture by spectral gap amplification.
The reduction consists of applying the circuit Hamiltonian with constant relative UNSAT penalty to the circuit verifier that decides the ground state energy of the arbitrary input local Hamiltonian.

\vspace{.5cm}
It is a testament to Feynman's great legacy that an idea first introduced in 1985 has had such a profound impact on a remarkably wide scope of research, from condensed matter physics to quantum computation, and that despite the growth of the field of Hamiltonian complexity his original construction continues to remain essentially unchanged to date.
We do not know whether or where limitations of improving the circuit-to-Hamiltonian construction will be reached, but hope that our contribution marks some of its limits, and helps to push other boundaries a little further.

\section{Acknowledgements}
J.\,B.\ acknowledges support from the German National Academic Foundation, the EPSRC (grant 1600123), and the Draper's Research Fellowship at Pembroke College.
E.\,C. acknowledges support provided by the Institute for Quantum Information and Matter, an NSF Physics Frontiers Center (NSF Grant PHY-1125565) with support of the Gordon and Betty Moore Foundation (GBMF-12500028).
J.\,B.\ would also like to thank Thomas Vidick and the IQIM for hospitality during spring 2016 and summer 2017.
We thank Toby Cubitt for useful discussions regarding \cref{sec:jordan-proof,sec:direct-proof}.
\newpage
\printbibliography

\newpage
\appendix
\addtocontents{toc}{\protect\setcounter{tocdepth}{1}}
\section{Appendix}
\subsection{An  $\Omega(T^{-2})$ Scaling from Padding with Identities}\label{sec:padding-ham}
In this appendix we show that a modified version of Kitaev's circuit-to-Hamiltonian construction that pads the input and output clock states with an $\Omega(T)$-sized identity circuit \emph{and} spreading out the input and output penalty yields an $\Omega(T^{-2})$ UNSAT penalty that can be proven by the geometrical lemma.  Let $\op U_1,\ldots,\op U_T$ be a quantum circuit on some space $\mathcal H$.
Define a padded circuit Hamiltonian on $\field{C}^{2T}\otimes\mathcal H$ as
\begin{align}
	\op H & = \sum_{t=-T/2}^{0}(\proj t + \proj{t+1} - \ketbra{t+1}{t} - \ketbra{t}{t+1})\otimes\1 \\ &+ \sum_{t=0}^{T}(\proj t + \proj{t+1})\otimes\1 - \ketbra{t+1}{t}\otimes \op U_t - \ketbra{t}{t+1}\otimes \op U_t^\dagger \\ &+ \sum_{t=T}^{3T/2}(\proj t + \proj{t+1} - \ketbra{t+1}{t} - \ketbra{t}{t+1})\otimes\1.
\end{align}
The overall encoded circuit of this Hamiltonian is thus
\begin{equation*}
	\underbrace{\1\cdots\1}_{T/2\text{\ times}} \op U_T\cdots \op U_1  \underbrace{\1\cdots\1}_{T/2\text{\ times}} =: \op U_{3T/2}'\cdots \op U_{-T/2}'.
\end{equation*}
The ground state of this unbiased clock construction is a uniform superposition over the history of the computation, i.e.\ $\ket{\Psi}=\frac1{\sqrt{2T}}\sum_{t=-T/2}^{3T/2}\ket t\otimes \op U_t'\cdots \op U_{-T/2}'\ket{\phi}$.

For some input and output penalty $\Pin,\Pout$ acting on the circuit qubits, we define the padded penalty Hamiltonian as
\begin{equation}
	\op P = \sum_{t=-T/2}^{0}\proj t\otimes\Pin + \sum_{t=T}^{3T/2}\proj t\otimes\Pout.
\end{equation}
Denote the projector onto the kernel of this penalty term with
\begin{equation*}
	\Pi_\mathrm{pen}:= \sum_{t=-T/2}^0\proj t\otimes\Pin^\perp +\sum_{t=T}^{3T/2}\proj t\otimes\Pout^\perp +\sum_{t=1}^{T-1}\proj t\otimes\1.
\end{equation*}
Let $W$ be the usual diagonalizing operator for $\op H$.
We first bound the angle $\theta$ between $\ker\op  P$ and $\ker \op H$: 
\begin{align*}
	\cos^2\theta
	 &= \max_{\ket{\eta'}\in\ker \op W\op H\op W^\dagger} \bra{\eta'} \left( \sum_{t=-T/2}^0\proj t\otimes\Pin^\perp +\sum_{t=T}^{3T/2}\proj t\otimes\Pout^\perp +\sum_{t=1}^{T-1}\proj t\otimes\1 \right)\ket{\eta'} \\
	 &= \max_{\ket{\phi}} \frac 1{2T} \bra{\phi}\left( \sum_{t=-T/2}^0  \proj t\otimes\Pin^\perp +\sum_{t=T}^{3T/2}  \proj t\otimes\Pout^\perp +\sum_{t=1}^{T-1}  \proj t\otimes\1 \right) \ket{\phi} \\ &= \max_{\ket{\phi}} \frac 14 \bra{\phi} \Pi_\text{in} + \op U\Pi_\text{out} \op U^\dagger \ket{\phi} + \frac 12.
\end{align*}
The inner product is then bounded by the acceptance probability $\epsilon$ of the circuit, i.e.
\begin{align*}
	\max_{\ket{\phi}} \bra{\phi} \Pi_\text{in} + \op U\Pi_\text{out} \op U^\dagger \ket{\phi} \le + \cos\vartheta \le 1 + \epsilon.
\end{align*}
Overall, this gives a bound
\begin{equation}
	\cos^2\theta \le \frac14(3 + \epsilon),
\end{equation}
which gives precisely the same UNSAT penalty as the un-padded version used in \cref{sec:theo1proof}.

\subsection{Proof of $\Omega(T^{-2})$ Scaling without Jordan's Lemma}\label{sec:direct-proof}
We start in \cref{sec:jordan-proof}, after \cref{eq:A}.
If our goal is to avoid using Jordan's lemma, we need the following technical result.
\begin{lemma}\label{lem:supertech}
	Let $\Pi\in\field C^{d\times d}$ be a projector, and two sets of vectors $\{\ket{e_i} \}, \{\ket{f_i} \}$ which span $\field C^d$ such that
	\begin{align*}
	\braket{e_i}{e_j}=\braket{f_i}{f_j} & =\delta_{ij}       &  & (i)   \\
	\braket{e_i}{f_j}                   & =0                 &  & (ii)  \\
	\bra{e_i}\Pi\ket{e_j}               & \propto\delta_{ij} &  & (iii) \\
	\bra{f_i}\Pi\ket{f_j}               & \propto\delta_{ij} &  & (iv)
	\end{align*}
	Then either:
	\begin{enumerate}
		\item For all $i$, $\bra{e_i}\Pi\ket{f_j}\neq 0$ for at most one $j$.
		\item If $\bra{e_i}\Pi\ket{f_j}\neq 0$ and $\bra{e_i}\Pi\ket{f_{j'}}\neq 0$ for some $j'\neq j$, then necessarily $\bra{f_j}\Pi\ket{f_j}=\bra{f_{j'}}\Pi\ket{f_{j'}}$.
	\end{enumerate}
\end{lemma}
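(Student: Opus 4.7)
The plan is to exploit the fact that $\Pi$ is a projector, so $\Pi^2 = \Pi$, and to read off constraints on the matrix entries of $\Pi$ in the combined basis. Conditions (i) and (ii), together with the assumption that $\{\ket{e_i}\}\cup\{\ket{f_j}\}$ spans $\field C^d$, say precisely that this union is an orthonormal basis of $\field C^d$, so we have the resolution of the identity $\1 = \sum_k \ketbra{e_k} + \sum_l \ketbra{f_l}$. Introduce the shorthand $\alpha_i := \bra{e_i}\Pi\ket{e_i}$, $\beta_j := \bra{f_j}\Pi\ket{f_j}$, and $\gamma_{ij} := \bra{e_i}\Pi\ket{f_j}$; by (iii) and (iv), these are the only potentially nonzero matrix entries of $\Pi$ in this basis.

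Next I would insert this resolution of the identity into $\bra{e_i}\Pi^2\ket{f_j}$. Using conditions (iii) and (iv) to kill the off-diagonal $e$–$e$ and $f$–$f$ terms, the double sum collapses to
\[
\bra{e_i}\Pi^2\ket{f_j} \;=\; \alpha_i\,\gamma_{ij} + \gamma_{ij}\,\beta_j.
\]
Setting this equal to $\bra{e_i}\Pi\ket{f_j} = \gamma_{ij}$, which holds because $\Pi^2 = \Pi$, yields the single identity
\[
\gamma_{ij}\,(\alpha_i + \beta_j - 1) \;=\; 0 \qquad \text{for all } i,j.
\]

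From this all the rest is bookkeeping. For each pair $(i,j)$ we must have either $\gamma_{ij} = 0$, or else $\beta_j = 1 - \alpha_i$. In the latter case $\beta_j$ is determined by $i$ alone. Hence if some row $i$ contains two nonzero entries $\gamma_{ij}\neq 0$ and $\gamma_{ij'}\neq 0$ with $j\neq j'$, then both $\beta_j$ and $\beta_{j'}$ equal $1-\alpha_i$, so $\bra{f_j}\Pi\ket{f_j} = \bra{f_{j'}}\Pi\ket{f_{j'}}$, which is case 2; otherwise every row has at most one nonzero cross entry, which is case 1. There is no real obstacle here: once the idempotency equation is written in the chosen basis the dichotomy is immediate, and the only subtlety is confirming that $\{\ket{e_i}\}\cup\{\ket{f_j}\}$ is genuinely a basis so that the completeness relation has no extra terms.
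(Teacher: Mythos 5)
Your proposal is correct and follows essentially the same route as the paper's proof: both insert the resolution of the identity $\1=\sum_k\ketbra{e_k}+\sum_l\ketbra{f_l}$ between the two factors of $\Pi=\Pi^2$, use (iii) and (iv) to collapse the sum to $\bra{e_i}\Pi\ket{f_j}\left(\bra{e_i}\Pi\ket{e_i}+\bra{f_j}\Pi\ket{f_j}\right)$, and conclude that any nonzero cross entry forces $\bra{e_i}\Pi\ket{e_i}+\bra{f_j}\Pi\ket{f_j}=1$, from which the stated dichotomy is immediate.
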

\begin{proof}
	First observe that $(i)$ and $(ii)$ imply that the set $\{\ket{e_i} \}\cup \{\ket{f_i} \}$ forms an orthonormal basis of $\field C^d$. For any $j$ for which
	\begin{equation}\label{eq:supertech-1}
	\bra{e_i}\Pi\ket{f_j}\neq 0,
	\end{equation}
	we can thus show
	\begin{align*}
	0 & \neq\bra{e_i}\Pi\ket{f_j}                                                                &  \\
	& =\bra{e_i}\Pi\Pi\ket{f_j}                                                                &  & \text{since $\Pi$ is a projector}                                      \\
	& =\bra{e_i}\Pi\left(\sum_k\ketbra{e_k} + \sum_l\ketbra{f_l} \right)\Pi\ket{f_j}           &  & \text{as the $\ket{e_i}$ and $\ket{f_i}$ form an ONB} \\
	& =\bra{e_i}\Pi\ket{e_i}\bra{e_i}\Pi\ket{f_j} + \bra{e_i}\Pi\ket{f_j}\bra{f_j}\Pi\ket{f_j} &  & \text{by $(iii)$ and $(iv)$}\\
	& =\bra{e_i}\Pi\ket{f_j}(\bra{e_i}\Pi\ket{e_i} + \bra{f_j}\Pi\ket{f_j}),                   &
	\end{align*}
	and thus $\bra{e_i}\Pi\ket{e_i} + \bra{f_j}\Pi\ket{f_j}=1$. If there exist two $j\neq j'$ for which \cref{eq:supertech-1} holds, then necessarily $\bra{f_j}\Pi\ket{f_j}=\bra{f_{j'}}\Pi\ket{f_{j'}}$.
\end{proof}

The following lemma now gives an explicit construction of the block-diagonalizing unitary $\op V$.
We will use the resulting block-diagonalizing unitary $\op V$ to bring $\Hkit$ into a shape similar to $\op A$, but where $\op V^\dagger\op B\op V$ connects the separate blocks with some off-diagonal entries.
By bounding the magnitude of the latter, we can finally lower-bound the overall spectrum of $\Hkit$ by $\Omega(T^{-2})$.
\begin{lemma}\label{lem:tech}
	Let $\op M\in\field C^{d\times d}$ be a projector. Then for any $1\le s< d$, there exists a block-diagonal unitary $\op V=\op V'\oplus\op V''$ with $\dim \op V'=s$ such that $\op D=\op V^\dagger\op M\op V$ is stoquastic. Furthermore, $\op V$ can be chosen such that---if we denote the rank of the upper-left $s\times s$ block with $r_a$ and the complementary lower-right block rank with $r_b$---$\op D_{ij}\neq 0$ if and only if
	\begin{align*}
	& i=j\land j\le r_a             && (i)\\
	\text{or}\quad & s\le i=j\le s+r_b             && (ii) \\
	\text{or}\quad & s\le j=s+i\le\min\{r_a,r_b\}  && (iii) \\
	\text{or}\quad & s\le i=s+j\le\min\{r_a,r_b\}. && (iv)
	\end{align*}
\end{lemma}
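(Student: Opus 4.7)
}
My plan is to construct $\op V$ in three stages: first diagonalize each diagonal block separately, then use the projector identity $\op M^2 = \op M$ together with \cref{lem:supertech} to show that the off-diagonal blocks can be consolidated into a matching, and finally use a diagonal phase unitary to enforce stoquasticity. Write $\op M$ in its natural $s\oplus(d-s)$ block form with diagonal blocks $\op M_{aa}$, $\op M_{bb}$ and off-diagonal block $X$. Since $\op M_{aa}$ and $\op M_{bb}$ are Hermitian, choose unitaries $\op V'$, $\op V''$ which diagonalize them, with eigenvalues ordered so that the nonzero eigenvalues $\alpha_1,\dots,\alpha_{r_a}$ and $\beta_1,\dots,\beta_{r_b}$ come first; this immediately gives (i) and (ii). After conjugating by $\op V'\oplus \op V''$, the off-diagonal block becomes some matrix $\tilde X$ with $(\tilde X)_{ij}=\bra{e_i}\op M\ket{f_j}$, where $\{\ket{e_i}\}$ and $\{\ket{f_j}\}$ are the eigenbases of the two diagonal blocks.

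Next, the plan is to exploit $\op M^2=\op M$. Computing the $(i,j)$-entry of this identity in the mixed block gives $(\alpha_i+\beta_j)\tilde X_{ij}=\tilde X_{ij}$, so any nonzero entry forces $\alpha_i+\beta_j=1$; in particular $\alpha_i\in(0,1)$ and $\beta_j\in(0,1)$, so the zero rows/columns of $\tilde X$ are exactly those indexed by kernel directions of $\op M_{aa}$ and $\op M_{bb}$. Evaluating $(\op M^2)_{ii}=\op M_{ii}$ on the diagonal further yields the row-norm identity $\sum_j|\tilde X_{ij}|^2=\alpha_i(1-\alpha_i)$ (and analogously for columns), which will fix the eventual magnitudes. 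Now I apply \cref{lem:supertech} with $\Pi=\op M$ and the combined basis $\{\ket{e_i}\}\cup\{\ket{f_j}\}$: its hypotheses hold by construction, so for each $i$ either row $i$ of $\tilde X$ has a unique nonzero entry, or all of the matching columns $j$ share the same value of $\bra{f_j}\op M\ket{f_j}=\beta_j$. In the latter case those $\ket{f_j}$ span a degenerate eigenspace of $\op M_{bb}$, so I have the freedom to further rotate $\op V''$ within that eigenspace without disturbing the already diagonal form; I pick that rotation to collapse the row to a single nonzero entry. Performing the analogous argument column-by-column, I obtain a $\tilde X$ such that each nonzero row and each nonzero column contains exactly one nonzero entry, i.e.\ a generalized permutation matrix on the non-kernel part.

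With this structure in hand, I relabel the eigenvectors so that this partial permutation becomes the identity pairing: the unique nonzero entry of row $i$ sits in column $i$, which is possible precisely for $1\le i\le\min\{r_a,r_b\}$. This produces (iii) and (iv) and uses nothing more than a reordering of eigenvectors with matched $\alpha+\beta=1$ pairs. Finally, to upgrade the construction to stoquasticity, I multiply $\op V''$ on the right by a diagonal phase unitary $\diag(e^{\mathrm i\varphi_1},\dots,e^{\mathrm i\varphi_{d-s}})$ chosen so that each surviving $\tilde X_{i,i}e^{-\mathrm i\varphi_i}$ becomes a nonpositive real; this is a free parameter because the $\ket{f_j}$ are eigenvectors of $\op M_{bb}$ so any diagonal phase leaves $\op M_{bb}$ diagonal. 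The resulting $\op D$ is Hermitian, has the prescribed nonzero pattern (i)-(iv), and has all off-diagonal entries nonpositive, hence stoquastic.

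The main obstacle I anticipate is the bookkeeping around degenerate eigenvalues: \cref{lem:supertech} allows multiple nonzero entries per row only when several diagonal entries on the opposite block coincide, and one must verify that the successive rotations used to sparsify rows do not destroy the sparsification already achieved for other rows. This is handled by observing that all such rotations act inside fixed degenerate eigenspaces, so rows/columns indexed by distinct eigenvalues are untouched; within a single degenerate eigenspace the sparsification is just a QR-type reduction of a block of $\tilde X$, which can be carried out simultaneously on both sides. The stoquasticity step and the reordering step are then routine.
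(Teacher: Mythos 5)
Your proposal follows essentially the same route as the paper: split $\op M$ into blocks, diagonalize $\op M_{aa}$ and $\op M_{bb}$ with nonzero eigenvalues sorted first, invoke \cref{lem:supertech} to control the off-diagonal block, rotate inside degenerate eigenspaces to collapse each row to a single entry, and finish with a diagonal phase and a reordering. The one genuine refinement you make is worth flagging: instead of the paper's iterative Givens-style elimination (whose well-definedness across successive rows the paper leaves implicit), you observe that the constraint $\alpha_i+\beta_j=1$ — which you re-derive directly from $\op M^2=\op M$ and which is also the content of the final line of the paper's proof of \cref{lem:supertech} — partitions $\tilde X$ into subblocks indexed by matched degenerate eigenspaces, each of which can be put in diagonal form in one shot by an SVD absorbed into $\op V'$ and $\op V''$. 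This makes the "iterate this process" step unambiguous and handles the row/column interference you worried about. The row-norm identity $\sum_j|\tilde X_{ij}|^2=\alpha_i(1-\alpha_i)$ you extract is also useful context, since it makes explicit that rows with $\alpha_i\in\{0,1\}$ vanish from the off-diagonal block. These are clarifications rather than a different proof; the key lemma, the decomposition, and the final stoquasticizing phase unitary are all the same.
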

\begin{proof}
	Let $r=\rank\op M$, and write $\op M=\sum_{i=1}^r\ketbra{m_i}$ as eigendecomposition with orthonormal vectors $\ket{m_i}$.  Split each of the eigenvectors $\ket{m_i}=:\ket{a_i}\oplus\ket{b_i}$, where $\dim\ket{a_i}=s$ (we abuse the direct sum notation here to mean simply dovetailing two vectors).  Then $\op M$ can be expressed as block-matrix:
	$$
	\op M=\sum_{i=1}^r\ketbra{m_i}=:\sum_i\begin{pmatrix}
	\ketbra{a_i} & \ketbra{a_i}{b_i} \\
	\ketbra{b_i}{a_i} & \ketbra{b_i}
	\end{pmatrix}
	=:\begin{pmatrix}
	\op M_{aa} & \op M_{ab} \\
	\op M_{ab}^\dagger & \op M_{bb}
	\end{pmatrix}.
	$$
	The matrix $\op M_{aa}$ is Hermitian since $\op M$ is. We can thus find a diagonalizing unitary $\op A:=\sum_{i=1}^s\ketbra{\alpha_i}{i}$, such that $\op A^\dagger\op M_{aa}\op A$ is diagonal, and similarly $\op B$ for $\op M_{bb}$; here the $\ket i$ just label the standard basis, and the $\ket{\alpha_i}$ an orthonormal eigenbasis of $\op A$.  We further pick $\op A$ and $\op B$ such that the nonzero eigenvalues of each block are sorted to the top left, which we can always achieve.  If we now were to set $\op V=\op A\oplus\op B$, we would immediately verify claims $(i)$ and $(ii)$.
	
	It is, however, not obvious what $\op A^\dagger\op M_{ab}\op B$ looks like, so we need to do some more work. We calculate the matrix entries:		
	\begin{equation}\label{eq:tech-1}
	\bra x\op A^\dagger\op M_{ab}\op B\ket y = \sum_{j=1}^s\braket{x}{j}\bra{\alpha_j}\sum_i\ketbra{a_i}{b_i} \sum_{k=1}^{d-s}\ket{\beta_k}\braket{k}{y} = \sum_{i}\braket{\alpha_x}{a_i}\braket{b_i}{\beta_y}.
	\end{equation}
	We expand $\ket{\alpha'_i}:=\ket{\alpha_i}\oplus\ket0$, and $\ket{\beta'_i}:=\ket0\oplus\ket{\beta_i}$, which allows us to apply \cref{lem:supertech} to $\op M$, $\{\ket{\alpha'_i} \}$ and $\{\ket{\beta'_i} \}$. Take some row $x$; if there is only one column index $y$ for which \cref{eq:tech-1}$\neq0$ we are done.  If, on the other hand, there are at least two $y\neq y'$ for which \cref{eq:tech-1} $\neq 0$, we can define a unitary $\op R$ via
	\begin{align*}
	\ket{\beta'_y} &\longmapsto \frac{1}{\sqrt{1+\gamma^2}}\left(\ket{\beta'_y} + \gamma\ket{\beta'_{y'}}\right)\\
	\ket{\beta'_{y'}} &\longmapsto \frac{1}{\sqrt{1+\gamma^2}}\left(\ket{\beta'_y} - \gamma\ket{\beta'_{y'}}\right),    		
	\end{align*}
	where $\gamma=\bra{\alpha_x}\Pi\ket*{\beta'_y} / \bra{\alpha_x}\Pi\ket*{\beta'_{y'}}$.  It is then easy to verify that $\bra x(\op A\oplus\op B\op R)^\dagger\op M(\op A\oplus\op B\op R)\ket{y'}=0$, and because of the degeneracy $\bra y\op B^\dagger\op M_{bb}\op B\ket y=\bra {y'}\op B^\dagger\op M_{bb}\op B\ket{y'}$, $\op R^\dagger\op B^\dagger\op M\op B\op R$ is still diagonal.
	
	Iterate this process; once in every row there is at most one off-diagonal entry.  As each resulting matrix remains Hermitian, the same process immediately proves that there is at most one off-diagonal entry in every column.  It is now clear that we can conjugate this $\op V^\dagger\op M\op V$ with a diagonal matrix where each entry is of modulus 1, thus eliminating the different phases in the $\xi_i$, and such that the resulting sign is negative on all off-diagonal entries. Such a matrix is necessarily unitary and renders the resulting matrix stoquastic by definition, and we absorb this phase elimination unitary into $\op V$. 
	
	\noindent
	After some reordering, the resulting matrix $\op V^\dagger\op M\op V$ has the form
	\small
	\begin{equation}\label{eq:matrix-alt}
	\begin{tikzpicture}
	[baseline] \node at (0,0) {$\begin{pmatrix}
		\begin{array}{ccccccc@{\hskip 8mm}ccccccc}
		\lambda_1 &     0     &   0    &    \cdots     & \cdots &   0    &   0    & -|\xi_1| &    0     &   0    &    \cdots    & \cdots &   0    &   0    \\
		0     & \lambda_2 & \ddots &               &        &        &   0    &    0     & -|\xi_2| & \ddots &              &        &        &   0    \\
		0     &  \ddots   & \ddots &    \ddots     &        &        & \vdots &    0     &  \ddots  & \ddots &    \ddots    &        &        & \vdots \\
		\vdots   &           & \ddots & \lambda_{r_a} &   0    &        & \vdots &  \vdots  &          & \ddots & -|\xi_{r_o}| &   0    &        & \vdots \\
		\vdots   &           &        &       0       &   0    & \ddots & \vdots &  \vdots  &          &        &      0       &   0    & \ddots & \vdots \\
		0     &           &        &               & \ddots & \ddots &   0    &    0     &          &        &              & \ddots & \ddots &   0    \\
		0     &     0     & \cdots &    \cdots     & \cdots &   0    &   0    &    0     &    0     & \cdots &    \cdots    & \cdots &   0    &   0    \\[5mm]
		-|\xi_1|  &     0     &   0    &    \cdots     & \cdots &   0    &   0    &  \mu_1   &    0     &   0    &    \cdots    & \cdots &   0    &   0    \\
		0     & -|\xi_2|  & \ddots &               &        &        &   0    &    0     &  \mu_2   & \ddots &              &        &        &   0    \\
		0     &  \ddots   & \ddots &    \ddots     &        &        &        &    0     &  \ddots  & \ddots &    \ddots    &        &        & \vdots \\
		\vdots   &           & \ddots & -|\xi_{r_o}|  &   0    &        &        &  \vdots  &          & \ddots &  \mu_{r_b}   &   0    &        & \vdots \\
		\vdots   &           &        &       0       &   0    & \ddots &        &  \vdots  &          &        &      0       &   0    & \ddots & \vdots \\
		0     &           &        &               & \ddots & \ddots &   0    &    0     &          &        &              & \ddots & \ddots &   0    \\
		0     &     0     & \cdots &    \cdots     & \cdots &   0    &   0    &    0     &    0     & \cdots &    \cdots    & \cdots &   0    &   0
		\end{array}
		\end{pmatrix}$};
	\draw[dashed,gray] (6.6,0) -- (-6.6,0) (0,-4.3) -- (0,4.4);
	\end{tikzpicture}
	\end{equation}
	\normalsize
	where $r_o:=\min\{r_a,r_b\}$, which finally satisfies claims $(i)-(iv)$. The claim of the lemma follows.
\end{proof}
One can then show by an application of \cref{lem:fullrank} that for a \no-instance, $r_o=r=d/2$.

The remaining argument follows as in \cref{sec:jordan-proof}.

\end{document}